\newtheorem{theorem}{Theorem}[section]
\newtheorem{corollary}{Corollary}[theorem]
\newtheorem{lemma}[theorem]{Lemma}
\begin{document}

\newcommand{\mean}[1]{\left\langle #1\right\rangle}

\newcommand{\tb}[1]{\textcolor{blue}{#1}}
\newcommand{\tgr}[1]{\textcolor{red}{#1}}
\newcommand{\tgrn}[1]{\textcolor{green}{#1}}

\title{An elegant proof of self-testing for multipartite Bell inequalities}
\author{Ekta Panwar}\email{ekta.panwar@phdstud.ug.edu.pl}\affiliation{Institute of Theoretical Physics and Astrophysics, Faculty of Mathematics, Physics and Informatics, University of Gda\'nsk, ul. Wita Stwosza 57, 80-308 Gda\'nsk, Poland}
\author{Palash Pandya}\affiliation{Institute of Theoretical Physics and Astrophysics, Faculty of Mathematics, Physics and Informatics, University of Gda\'nsk, ul. Wita Stwosza 57, 80-308 Gda\'nsk, Poland}
\author{Marcin Wie\'sniak}\affiliation{Institute of Theoretical Physics and Astrophysics, Faculty of Mathematics, Physics and Informatics, University of Gda\'nsk, ul. Wita Stwosza 57, 80-308 Gda\'nsk, Poland}
\affiliation{International Centre for Theory of Quantum Technologies, University of Gda\'nsk, ul. Wita Stwosza 63, 80-308 Gda\'nsk, Poland}

\date{\today}
\begin{abstract}
The predictions of quantum theory are incompatible with local-causal explanations. This phenomenon is called Bell non-locality and is witnessed by violation of Bell-inequalities. The maximal violation of certain 
Bell-inequalities can only be attained in an essentially unique manner. This feature is referred to as \emph{self-testing} and constitutes the most accurate form of certification of quantum devices. While self-testing in bipartite Bell scenarios has been thoroughly studied, self-testing in the more complex multipartite Bell scenarios remains largely unexplored. This work presents a simple and broadly applicable self-testing argument for $N$-partite correlation Bell inequalities with two binary outcome observables per party. Our proof technique forms a generalization of the Mayer-Yao formulation and is not restricted to linear Bell-inequalities, unlike the usual \emph{sum of squares} method. To showcase the versatility of our proof technique, we obtain self-testing statements for $N$ party Mermin-Ardehali-Belinskii-Klyshko (MABK) and Werner-Wolf-Weinfurter-\.Zukowski-Brukner (WWW\.ZB) family of linear Bell inequalities, and Uffink's family of $N$ party quadratic Bell-inequalities. 
  
\end{abstract}
\maketitle
\section{Introduction}

One of the most striking features of quantum theory is the deviation of its predictions for Bell experiments from the predictions of classical theories with local casual (hidden variable) explanations \cite{PhysicsPhysiqueFizika.1.195}. This phenomenon is captured by quantum violation of statistical inequalities, which are satisfied by all local realistic theories, referred to as Bell-inequalities \cite{RevModPhys.86.419}.
Experimental demonstrations of the loophole-free (e.g., \cite{Hensen2015, PhysRevLett.115.250401, PhysRevLett.115.250402}) violation of such inequalities implies the possibility of sharing intrinsically random private numbers among an arbitrary number of spatially separated parties which power unconditionally secure private key distribution schemes (for more information on Device-independent quantum cryptography see \cite{2014, PhysRevLett.97.120405,colbeck2011quantum, Ekert2014, PhysRevLett.67.661}). Such applications of Bell non-locality follow from the fact that the extent of Bell inequality violation can uniquely identify the specific entangled quantum states and measurements, a phenomenon referred to as \textit{self-testing} (see \cite{743501,mayers2004self} for initial contributions and the recent review of the progress till now see \cite{Supic2020selftestingof}). 

Self-testing statements are the most accurate form of certifications for quantum systems.
Self-testing schemes allow us to infer the underlying physics of a quantum experiment, i.e., the state and the measurements (up to local isometry) without any characterization of the internal workings of the measurement devices, and based only on the observed statistics, i.e., treating the measurement devices as black boxes with classical inputs and outputs. Self-testing has found many application in several areas like \textit{device-independent randomness generation} \cite{Colbeck_2011,colbeck2011quantum},\textit{ quantum cryptography} \cite{Ekert2014}, \textit{entanglement detection} \cite{Gheorghiu2019},\textit{ delegated quantum computing } \cite{PhysRevA.98.042336,PhysRevLett.121.180505}. While self-testing in the bipartite Bell scenarios has been thoroughly studied, self-testing in more complex multipartite Bell scenarios (see FIG. \ref{manipuraka}) remains largely unexplored. 
 
In a multipartite setting, self-testing has been demonstrated for graph states using stabilizer operators \cite{10.1007/978-3-642-54429-3_7}. Self-testing of multipartite graph states and partially entangled \emph{Greenberger-Horne-Zeilinger} (\textbf{GHZ})  states has been demonstrated using a stabilizer-based approach, and Bell inequalities explicitly constructed for the state \cite{PhysRevLett.124.020402, greenberger2007going}. In general, multipartite Bell inequalities explicitly tailored for self-testing of a multipartite entangled state can be obtained using linear programming  \cite{PhysRevA.90.042340,PhysRevA.91.022115,PhysRevLett.113.040401,PhysRevLett.121.180505}. However, the self-testing statements must be verified using the numerical technique (such as the Swap method), which tends to be computationally expensive for multipartite scenarios with more than four parties. Completely analytical self-testing statements have also been obtained for multipartite states such as the \textbf{W} and the Dicke states by reprocessing self-testing protocols of bipartite states \cite{PhysRevA.90.042339,fadel2017selftesting,_upi__2018}. Furthermore, parallel self-testing statements for multipartite states can be obtained using categorical quantum mechanics \cite{PhysRevA.72.052103}. Finally, the Mayers-Yao criterion can also be utilized for self-testing of graph states when the underlying graph is a triangular lattice \cite{PhysRevA.97.052308}. 

Mayers and Yao's criterion states that if given a Bell inequality is maximally violated, the shared quantum state must be equivalent to a reference state up to local isometries, and the optimal observables are (almost) uniquely defined with respect to the state. In this article, we prove Mayer-Yao-like self-testing statements for multipartite Bell scenarios without relying on the Bell-operator dependent sum-of-squares decomposition \cite{PhysRevA.91.052111}. Consequently, our methodology immediately extends to all multipartite Bell scenarios, where each spatially separated party has two observables. To exemplify our proof technique, we obtain self-testing statements for $N$ party Mermin-Ardehali-Belinskii-Klyshko (MABK) and Werner-Wolf-Weinfurter-\.Zukowski-Brukner (WWW\.ZB) family of linear Bell inequalities. Moreover, our methodology enables the recovery of self-testing statements for  Bell functions which are not only the mean value of a linear operator. Specifically, to showcase the versatility of our methodology, we obtain self-testing statements for the maximal violation of $N$ party Uffink's quadratic Bell inequalities, which form tight witnesses of genuine multipartite non-locality \cite{2002}. 

The paper is organized as follows. In section \textbf{II} we present the requisite preliminaries and specify the families of multipartite Bell inequalities we consider in this article. Section \textbf{III}, we develop the mathematical preliminaries of the self-testing scheme. In particular, we show that the observables of each party in two setting binary outcome multipartite Bell scenarios can always be simultaneously represented as anti-diagonal matrices. In Section \textbf{IV}, we utilize this anti-diagonal matrix representation to obtain self-testing statements for the maximal violations of $N$ party MABK and tripartite WWW\.ZB family of linear (on correlators) Bell inequalities. While the former serves to certify our scheme, the latter demonstrates the spectrum of situations one can encounter. Finally, in Section \textbf{IV} we obtain self-testing statements for the maximal violation of $N$ party Uffinik's quadratic Bell inequalities. In section \textbf{V}, we conclude by providing a brief summary of our work and specifying potential applications.

\section{Two-setting $N$-party correlation Bell inequalities}
\label{INEQS}
This section presents the requisite preliminaries and specifies the families of Bell inequalities considered in this article. Specifically, here we consider multipartite Bell scenarios entailing $N$ spatially separated (hence non-signaling) parties. We restrict ourselves to Bell scenarios where each party $j\in\{1,\ldots, N\}$ has two binary outcome observables $A^{(j)}, A'^{(j)}$ for simplicity and brevity \footnote{However, our technique can easily be applied to proving more complex scenarios.}. In contrast to the well-studied bipartite Bell scenarios, multipartite scenarios are substantially richer in complexity. While the notion of multipartite locality is an obvious extension of bipartite locality, multipartite behaviors can be non-local in many distinct ways. Apart from this, the most significant impediment in obtaining self-testing statements for multipartite Bell scenarios is that they do not admit simplifying characterizations such as Schmidt decomposition, unlike the bipartite Bell scenarios.

Typically, Bell inequalities comprise of a Bell expression and a corresponding local causal bound. The violation of Bell inequalities witnesses the non-locality of the underlying behaviors. 
 In the rest of this section, we introduce the families of multipartite Bell inequalities, for which we demonstrate self-testing statements using our proof technique. 
\subsection{ Linear inequalities}
The most frequently used Bell-inequalities comprise of Bell expressions which are linear functionals of observed probabilities. Our self-testing argument is immediately applicable for any Bell inequality whose Bell expression is the mean of any linear combination of $N$ party operators of the form 
${\bigotimes_{i=1}^N O^{(i)}}$, where $O^{(i)}\in\{A^{(i)},A'^{(i)}\}$. Among these linear Bell inequalities, we consider the \emph{Werner-Wolf-Weinfurter-\.Zukowski-Brukner} (WWW\.ZB) families of correlation Bell inequalities for which there exists a well-defined systematic characterization \cite{1099690194}. The Bell operator for WWW\.ZB family of Bell inequalities has the following general form,
\begin{eqnarray} \label{WWZB}
\mathcal{W}_{N}=\frac{1}{2^N}\sum_{s_1,...,s_N=\pm 1}S(s_1,...,s_N)\bigotimes_{j=1}^N(A^{(j)}+s_jA'^{(j)}),
\end{eqnarray}
where $S(s_1,...s_N)=\pm 1$. Every tight (WWW\.ZB) inequality is given by $\langle \mathcal{W}_{N}\rangle \leq_{\mathcal{L}}1$. Where $\mathcal{L}$ stands for the local hidden variable polytope. 

Next, we introduce a sub-family of  (WWW\.ZB) Bell inequalities the \emph{Mermin-Ardehali-Belinskii-Klyshko} (MABK) family of inequalities, featuring one inequality for any number $N$ of parities \cite{Belinski__1993, PhysRevA.46.5375, PhysRevLett.65.1838}. Moreover, in the following subsection, we introduce non-linear (quadratic) Bell inequalities composed of these MABK inequalities. The $N$-party MABK operators can be obtained recursively as, 
\begin{eqnarray} \nonumber\label{MABK}
& \mathcal{M}_N &=\frac{1}{2}\Big(\mathcal{M}_{N-1}\otimes(A^{(j)}+A'^{(j)}) \\
& & \hspace{21pt} + \mathcal{M'}_{N-1}\otimes(A^{(j)}-A'^{(j)})\Big)\nonumber \\ 
& & = \frac{1}{2}\Big((\frac{1-i}{{2}})^{N-1}\bigotimes_{j=1}^N(A^{(j)}+iA'^{(j)})\nonumber\\
 & &\hspace{21pt} + (\frac{1+i}{{2}})^{N-1}\bigotimes_{j=1}^N(A^{(j)}-iA'^{(j)})\Big),
\end{eqnarray}
where $\mathcal{M}'_{N-1}$ is the same expression as $\mathcal{M}_{N-1}$ but with all
$A^{(j)}$ and $A'^{(j)}$ interchanged.
 The corresponding Bell inequalities are of the form, $\langle \mathcal{M}_{N}\rangle \leq_{\mathcal{L}}1\leq_{\mathcal{Q}_{N-1}}2^{\frac{N-2}{2}} \leq_{\mathcal{Q}}2^{\frac{N-1}{2}},$ where $\mathcal{L},\mathcal{Q}_{N-1},\mathcal{Q}$ stand for the convex polytope of local hidden variable correlations, the convex set of biseparable quantum correlations, and the convex set of quantum behaviors, respectively.
 The maximal quantum value can only be attained when the local observables anti-commutes and the parties share the maximally entangled $N$-partite GHZ state.

Finally, we present yet another relevant sub-family of the (WWW\.ZB) inequalities, referred to as \emph{Svetlichny} inequalities \cite{2002}, which were explicitly conceived to witness genuine $N$-partite non-locality. Moreover, the quadratic Bell inequalities featured in the following subsection can also be composed of $N$-partite Svetlichny-Bell inequalities. The  Svetlichny operator can be formed of MABK operators \eqref{MABK} in the following way,

\begin{equation} \label{Svet}
    \mathcal{S}^{\pm}_{N} = \begin{cases} \displaystyle
    2^{k-1}((-1)^\frac{k(k\pm1)}{2}\mathcal{M}_N^{\pm}, \\ (\text{for } N=2k); \\
    2^{k\pm1}((-1)^\frac{k(k\pm1)}{2}\mathcal{M}_N\mp (-1)^\frac{k(k\mp1)}{2}\mathcal{M'}_N)), \\ (\text{for } N=2k+1). 
    \end{cases}
\end{equation}
The notation $ \mathcal{M}^{+}_N$ is equivalent to $ \mathcal{M}_N$  and $ \mathcal{M}^{-}_N$ is equivalent to $ \mathcal{M'}_N$. The corresponding Svetlichny inequalities are of the form, $
   \langle\mathcal{S}^{\pm} _{N}\rangle \leq_{\mathcal{Q}_{N-1}}2^{N-1}\leq_{\mathcal{Q}}2^{N-\frac{1}{2}}. 
$
\begin{figure}
    \centering
    \includegraphics[width=\linewidth]{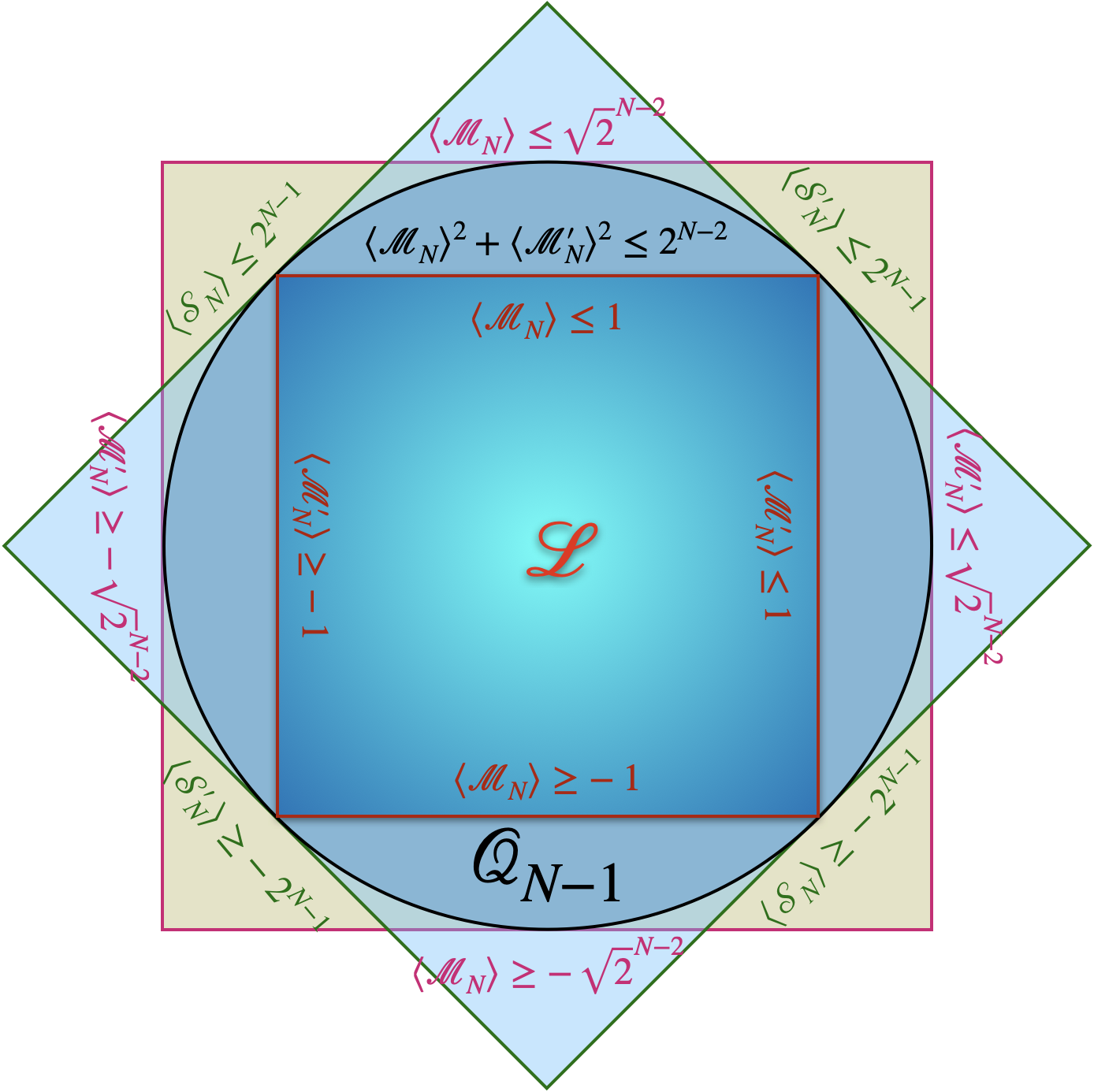}
    \caption{This graphic is a schematic representation of the correlations in multipartite (involving arbitrary number $N$ of spatially separated parties) Bell scenarios. Just like the bipartite Bell scenarios, the correlations which admit local hidden variable explanations form a convex polytope $\mathcal{L}$ (shiny blue small horizontal square), whose facets are the $N$-party MABK inequalities \eqref{MABK} (red edges). However, the convex set of biseparable quantum correlations $\mathcal{Q}_{N-1}$ (sky blue disk) does not form a polytope. Consequently, the linear inequalities such as the MABK \eqref{MABK} (pink edges of the large horizontal square) and Svetlichny inequalities \eqref{Svet} (green edges of the tilted square) do not form tight witness of genuine multipartite quantum nonlocality. As the boundary of biseparable quantum correlations (black circle) is non-linear, the Uffink's quadratic inequalities \eqref{Uff} form tighter witnesses of genuine multipartite quantum nonlocality.    
    \label{manipuraka}}
\end{figure}
 
\subsection{Quadratic inequalities}
In bipartite Bell scenarios, without loss of generality, it is enough to consider linear Bell inequalities to witness non-locality, as the set of behaviors which admit a local casual explanation form a convex polytope. In contrast to the bipartite case, non-linear Bell inequalities form tighter witnesses of genuine multipartite non-locality. In multipartite Bell scenarios, the convex set of bi-separable quantum behaviors does not form a polytope. 

In particular, the Uffink's quadratic Bell inequalities form stronger witnesses of genuine multipartite non-locality than the linear inequalities. There are two distinct families of $N$ party quadratic Bell inequalities formed of the MABK \eqref{MABK} and Svetlichny \eqref{Svet}, families of linear inequalities, $\mathcal{U}^{\mathcal{M}}_N$ and  $\mathcal{U}^{\mathcal{S}}_N$,   respectively, which have the form
\begin{equation}\label{Uff}
\mathcal{U}^{\mathcal{M}}_N=\mean{\mathcal{M}_{N}}^2+\mean{\mathcal{M}'_{N}}^2\leq_{\mathcal{Q}_{N-1}}2^{N-2}\leq_{\mathcal{Q}}2^{N-1} ,
\end{equation}
\begin{equation}\label{UffS}
\mathcal{U}^{\mathcal{S}}_N=\mean{\mathcal{S}_{N}}^2+\mean{\mathcal{S}'_{N}}^2\leq_{\mathcal{Q}_{N-1}}2^{2N-2}\leq_{\mathcal{Q}} 2^{2N-1},
\end{equation}

\section{Characterizing local observables}
\label{maths}
In this section, we obtain a characterization for the binary outcome local observables $A=A^{(j)}$ and $A'=A'^{(j)}$ on an arbitrary Hilbert space $\mathcal{H}=\mathcal{H}^j$ for any given party $j\in\{1,\ldots,N\}$. In the light of Naimark's dilation theorem, without loss of generality we can take the observables $A$ and $A'$ to be projective, i.e., $A^2=A'^2=\mathbb{I}$. As for Bell operators composed of projective observables the optimal state is simply the eigenvector corresponding to the maximum eigenvalue, the shared state can always be taken to be pure. First, via the following lemma we demonstrate that second observable $A'$ can be split into two observables, one which commutes with the first observable $A$ and one which anti-commutes with $A$,
\begin{lemma}
\label{BLOCK2}
Given any two binary outcome projective observables $A$ and $A'$, $A'$ can be decomposed as the sum of two observables $-\mathds{1}\leq A'_{-} \leq \mathds{1}$ and $-\mathds{1}\leq A'_{+} \leq \mathds{1}$, such that   $\comm{A}{A'_{+}}=0$, $\acomm{A }{A'_{-}}=0$, $\acomm{A'_{+}}{A'_{-}}=0$ and $(A'_{+})^2+(A'_{-})^2=\mathds{1}$.
\end{lemma}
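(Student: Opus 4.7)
The plan is to construct $A'_{\pm}$ explicitly by symmetrizing $A'$ with respect to conjugation by $A$. Concretely, I set
\begin{equation*}
A'_+ = \tfrac{1}{2}(A' + AA'A), \qquad A'_- = \tfrac{1}{2}(A' - AA'A).
\end{equation*}
Both operators are manifestly Hermitian, since $(AA'A)^{\dagger}=AA'A$, and by construction $A'_+ + A'_- = A'$. So the desired decomposition of $A'$ exists at the level of operators, and everything else reduces to verifying the four algebraic properties claimed in the lemma.

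The commutation and anti-commutation properties with $A$ follow by direct calculation using only $A^2 = \mathds{1}$. For instance,
\begin{equation*}
A A'_+ = \tfrac{1}{2}(AA' + A^2 A' A) = \tfrac{1}{2}(AA' + A'A) = A'_+ A,
\end{equation*}
which gives $\comm{A}{A'_+}=0$; the opposite sign appears for $A'_-$, yielding $AA'_- = -A'_- A$ and hence $\acomm{A}{A'_-}=0$. This is precisely the structural intuition behind the definition: conjugation by $A$ leaves the $A$-commuting part of $A'$ invariant and flips the sign of the $A$-anticommuting part, so averaging versus anti-averaging isolates the two pieces.

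The remaining two identities hinge on the single observation
\begin{equation*}
(AA'A)^2 = AA'A^2A'A = AA'^2A = A^2 = \mathds{1},
\end{equation*}
which uses $A^2=\mathds{1}$ and $A'^2=\mathds{1}$ together. Expanding the squares gives the cross-term cancellations
\begin{equation*}
(A'_+)^2 + (A'_-)^2 = \tfrac{1}{2}\bigl[A'^2 + (AA'A)^2\bigr] = \mathds{1},
\end{equation*}
and
\begin{equation*}
\acomm{A'_+}{A'_-} = \tfrac{1}{2}\bigl[A'^2 - (AA'A)^2\bigr] = 0.
\end{equation*}
Finally, the spectral bound $-\mathds{1}\leq A'_{\pm}\leq \mathds{1}$ is immediate: Hermiticity together with $(A'_+)^2 = \mathds{1} - (A'_-)^2 \leq \mathds{1}$ (since $(A'_-)^2\geq 0$) forces the spectrum of $A'_+$ into $[-1,1]$, and likewise for $A'_-$.

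I do not anticipate any serious technical obstacle: the whole argument is an elementary algebraic exercise once the right ansatz is written down. The only genuinely non-obvious step is the identity $(AA'A)^2=\mathds{1}$, which relies crucially on both observables being projective. Without this assumption the symmetrization would still produce commuting and anti-commuting pieces, but the normalization $(A'_+)^2+(A'_-)^2=\mathds{1}$ would fail; this is the reason Naimark dilation is invoked just before the lemma statement.
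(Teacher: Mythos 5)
Your proof is correct and amounts to the paper's own argument in coordinate-free form: the operators $A'_{\pm}=\tfrac{1}{2}(A'\pm AA'A)$ are precisely the block-diagonal and block-antidiagonal parts of $A'$ in the eigenbasis of $A$ that the paper writes out explicitly, and your algebraic verification (using $A^2=A'^2=\mathds{1}$ and $(AA'A)^2=\mathds{1}$) establishes the same four identities the paper reads off from the block computation. The explicit symmetrization formula is a clean, basis-free packaging, but no genuinely different idea is involved.
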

\begin{proof}
Without loss of generality, the binary outcome projective observable $A$ can be represented as a diagonal matrix with positive and negative eigenvalues grouped together,
\begin{equation}
\label{UNITS}
A=\left(\begin{array}{cc}\openone_{m} & 0 \\ 0 &-\openone_{n}\end{array}\right),
\end{equation}
where $\mathds{1}_m$ is the $m\times m$ identity operator. With respect to $A$, the binary outcome projective observable $A'$ has the following generic matrix representation,
\begin{equation}
\label{UNITS2}
    A'=\left(\begin{array}{cc}D_1 & D_2\\ 
    D_2^\dagger & D_3\end{array}\right).
\end{equation}
such that,
\begin{align}
\label{UNITS3}
A'_{-} & =\left(\begin{array}{cc}
            0 & D_2\\
            D_2^\dagger & 0 
          \end{array}\right)\,,\\
%A'_{+}&=\mqty(D_1 & 0 \\ 0 & D_1)
\label{UNITS4}
A'_{+}&=\left(\begin{array}{cc}
        D_1 & 0 \\
        0 & D_3
        \end{array}\right)\,.
\end{align}
Clearly, $\comm{A}{A'_{+}}=0$, and $\acomm{A }{A'_{-}}=0$.
As $A'$ is projective we have,
\begin{align}
    (A')^2 = &(A'_{+})^2+(A'_{-})^2 + \acomm{A'_{+} }{A'_{-}} \\
            =&\left(\begin{array}{cc}
                D_1^2+D_2D_2^\dagger & D_1D_2+D_2D_3\\
                D_2^\dagger D_1+D_3D_2^\dagger & D_2^\dagger D_2+D_3^2
            \end{array}\right)\nonumber\\
    =&\,\mathds{1}\,,
\end{align}
which requires the off-diagonal blocks to be zero,
\begin{eqnarray}
    &\{A'_{+},A'_{-}\}&=\left(\begin{array}{cc}
        0 & D_1D_2+D_2D_3\\
        D_2^\dagger D_1+D_3D_2^\dagger & 0 
    \end{array}\right)\nonumber\\
    & &=0,
\end{eqnarray}
and leaves $(A'_{+})^2+(A'_{-})^2=\mathds{1}$, 
which completes the proof.
\end{proof}
Using the above Lemma, the relation between the eigenvalues of $A'$, $A'_{-}$, and $A'_{+}$ can be ascertained to be $\lambda_{A'}^i=\pm\sqrt{(\lambda^i_{A'_{-}})^2 + (\lambda^i_{A'_{+}})^2}=\pm 1$, where $\lambda^i_{\cdot}$ denotes the $i^\text{th}$ eigenvalue \cite{Ponomarenko2020EigenvaluesOT}. So that the eigenvalues can be without loss of generality taken as $\pm \sin{\theta_i}$ and $\pm \cos{\theta_i}$ for $A'_{-}$ and $A'_{+}$ respectively. 

Next, we show that the spectra of any two anti-commuting observables must be symmetric, and each observable maps the positive eigenspaces to the negative eigenspaces of the other.

\begin{lemma}
\label{spectrum}
Given any two anti-commuting observables $A$ and $A'_{-}$, their spectra are symmetric (i.e., if $\lambda$ is an eigenvalue of one of the operators, then so is $-\lambda$), moreover, $A'_{-}$ ($A$) is a linear map between positive and negative eigenspaces of $A$ ($A'_{-}$) or nullifies its eigenvectors.
\end{lemma}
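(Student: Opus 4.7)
The plan is to unpack the anticommutation relation $\{A, A'_{-}\}=0$ directly at the level of eigenvectors, which is the cleanest and most economical way to see both claims simultaneously. The whole argument rests on the single rearrangement $A\,A'_{-}=-A'_{-}\,A$, so I would not attempt any block-matrix manipulation of the kind used in Lemma \ref{BLOCK2}; the eigenvector viewpoint already packages the information we need.

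First I would pick an arbitrary eigenvector $|v\rangle$ of $A$ with eigenvalue $\lambda$, i.e., $A|v\rangle=\lambda|v\rangle$, and ask what $A$ does to the image vector $A'_{-}|v\rangle$. Using the anticommutation, $A(A'_{-}|v\rangle)=-A'_{-}(A|v\rangle)=-\lambda(A'_{-}|v\rangle)$, so $A'_{-}|v\rangle$ is either the zero vector or an eigenvector of $A$ with eigenvalue $-\lambda$. This single line already delivers both pieces of the lemma in the direction in which $A'_{-}$ acts on eigenvectors of $A$: the dichotomy is precisely the ``or nullifies its eigenvectors'' clause of the statement, and whenever $A'_{-}$ does not annihilate the entire $\lambda$-eigenspace of $A$ the presence of $-\lambda$ in the spectrum of $A$ follows at once.

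To finish the proof I would swap the roles of $A$ and $A'_{-}$. Taking $|w\rangle$ with $A'_{-}|w\rangle=\mu|w\rangle$, the identical manipulation yields $A'_{-}(A|w\rangle)=-\mu(A|w\rangle)$, giving the analogous dichotomy for $A$ acting on eigenvectors of $A'_{-}$, and in particular showing that $-\mu$ is in the spectrum of $A'_{-}$ whenever $A$ does not annihilate the $\mu$-eigenspace. I do not expect any substantive obstacle here: the argument is essentially a three-line calculation, and the built-in ``or nullifies'' alternative is precisely the escape hatch that accommodates the degenerate case in which $A'_{-}$ (respectively $A$) sends an entire eigenspace to zero, which would otherwise spoil a naive reading of the spectrum-symmetry statement. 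The mildest point worth being careful about is phrasing this exception cleanly so that the reader sees the symmetry claim as holding on the orthogonal complement of the other operator's kernel.
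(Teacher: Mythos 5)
Your proposal is correct and follows essentially the same route as the paper: both take an eigenvector $\ket{\psi}$ of $A$ with eigenvalue $\lambda$, use the anticommutation to get $A(A'_{-}\ket{\psi})=-\lambda A'_{-}\ket{\psi}$, conclude that $A'_{-}\ket{\psi}$ is either null or an eigenvector with flipped eigenvalue, and then exchange the roles of the two operators. No substantive difference.
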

\begin{proof}Let $\ket{\psi}$ be an eigenvector of $A$ corresponding to the eigenvalue $\lambda$ such that 
$A\ket{\psi}=\lambda\ket{\psi}$, then
\begin{equation}
A(A'_{-}\ket{\psi})=-A'_{-}A\ket{\psi}=-\lambda A'_{-}\ket{\psi}.
\end{equation}
Thus $A'_{-}\ket{\psi}$ is either a null vector or a new eigenvector of $A$ with the sign of the eigenvalue flipped. Similarly, one can show the same for eigenvalues of $A'_{-}$.
\end{proof}
The above lemma further yields the following observations,
\begin{corollary}
Given any two anti-commuting observables $A$ and $A'_{-}$, such that $A^2=\mathds{1}$, then $A$  provides a bijective mapping between the eigenspaces $E^{A'_{-}}_{\lambda}$ and $E^{A'_{-}}_{-\lambda}$ corresponding to the eigenvalues $ \lambda^{A'_{-}}$ and $- \lambda^{A'_{-}}$, respectively. Moreover, the direct sum of the pair of eigenspaces $E^{A'_{-}}_{\pm\lambda_i}=E^{A'_{-}}_{\lambda_i}\oplus E^{A'_{-}}_{-\lambda_i}$ is even dimensional, and so is the subspace $\bigoplus_{i}E^{A'_{-}}_{\pm\lambda_i}$.
\end{corollary}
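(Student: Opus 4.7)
The plan is to leverage Lemma~\ref{spectrum} together with the involutive structure $A^{2}=\mathds{1}$ to promote the ``either a null vector or an eigenvector with flipped eigenvalue'' conclusion into a genuine bijection between paired eigenspaces. Since $A$ is invertible (in fact equal to its own inverse), $A\ket{\psi}=0$ forces $\ket{\psi}=0$, ruling out the null-vector alternative. Hence the restriction
\begin{equation*}
A\big|_{E^{A'_{-}}_{\lambda}}\colon E^{A'_{-}}_{\lambda}\longrightarrow E^{A'_{-}}_{-\lambda}
\end{equation*}
is a well-defined injective linear map for every $\lambda$ in the spectrum of $A'_{-}$.

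I would then establish bijectivity by a mutual-inverse argument: by the symmetry of Lemma~\ref{spectrum} the analogous restriction $A\big|_{E^{A'_{-}}_{-\lambda}}\colon E^{A'_{-}}_{-\lambda}\to E^{A'_{-}}_{\lambda}$ is also an injective linear map, and composing the two in either order reproduces $A^{2}=\mathds{1}$ on the corresponding eigenspace, i.e.\ the identity map. Thus each restriction is a two-sided inverse of the other, both are bijections, and in particular $\dim E^{A'_{-}}_{\lambda}=\dim E^{A'_{-}}_{-\lambda}$ for every eigenvalue $\lambda$ of $A'_{-}$.

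For the even-dimensionality claim I would invoke Hermiticity of $A'_{-}$: eigenspaces associated with distinct eigenvalues are mutually orthogonal, so for any $\lambda_{i}\neq 0$ the sum $E^{A'_{-}}_{\lambda_{i}}\oplus E^{A'_{-}}_{-\lambda_{i}}$ is a genuine orthogonal direct sum of dimension $2\dim E^{A'_{-}}_{\lambda_{i}}$, which is manifestly even. Since the pairs corresponding to distinct $|\lambda_{i}|$ are themselves mutually orthogonal, the total dimension of $\bigoplus_{i}E^{A'_{-}}_{\pm\lambda_{i}}$ is a sum of even integers, and therefore even.

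The only real subtlety, and thus the main obstacle, is bookkeeping around a possible zero eigenvalue of $A'_{-}$: there $E^{A'_{-}}_{0}$ pairs with itself and the even-dimensional conclusion can fail in general. The natural resolution, consistent with the parametrisation $\pm\sin\theta_{i}$ stated just before the corollary, is to index the pairs by strictly positive $\lambda_{i}$ and treat $\ker A'_{-}$ separately; every bijectivity statement above still goes through verbatim on the orthogonal complement of the kernel.
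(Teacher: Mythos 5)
Your proof is correct and follows essentially the same route as the paper: both exploit the anti-commutation relation to show that $A$ maps $E^{A'_{-}}_{\lambda}$ into $E^{A'_{-}}_{-\lambda}$, use $A^{2}=\mathds{1}$ to turn this into a bijective pairing of the two eigenspaces, and set $\ker(A'_{-})$ aside as not contributing. The only difference is minor: where you obtain evenness by summing the equal dimensions $\dim E^{A'_{-}}_{\lambda_{i}}=\dim E^{A'_{-}}_{-\lambda_{i}}$ over the nonzero eigenvalues, the paper supplements the pairing argument with the determinant identity $\det(A'_{-})=(-1)^{d}\det(A'_{-})$ to address the parity of the full space; your version is, if anything, the more self-contained of the two, and your explicit bookkeeping of a possible zero eigenvalue matches the paper's remark that the null subspace can be ignored.
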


\begin{proof}
If $\lambda^{A'_{-}}$ is an eigenvalue of $A'_{-}$, with eigenvector $\ket\psi$, then,
\begin{equation}
    AA'_{-}\ket{\psi} = -A'_{-}A\ket\psi = \lambda^{A'_{-}}A\ket\psi,
\end{equation}
and applying $A$ to the eigenvalue equation again,
\begin{equation}
    -AA'_{-}A\ket{\psi} = \lambda^{A'_{-}}A^2\ket\psi = \lambda^{A'_{-}}\ket\psi,
\end{equation}
we recover the same eigenvector again. In other words, $A$ maps the subspace $E^{A'_{-}}_{\lambda_i}\oplus E^{A'_{-}}_{-\lambda_i}$ to itself. This pairing of eigenvectors leads us to the conclusion that this subspace is even dimensional.
The diagonalizability of the operator, $A'_{-}$, implies that the effective Hilbert space can be decomposed into the direct sum of its eigenspaces, 
\begin{equation}
\label{eq:direct-sum}
    \mathcal{H} = E^{A'_{-}}_{\lambda_1}\oplus E^{A'_{-}}_{-\lambda_1}\cdots\oplus E^{A'_{-}}_{\lambda_r}\oplus E^{A'_{-}}_{-\lambda_r}\oplus\ker(A'_{-}).
\end{equation}
It also follows from the anti-commutation relation that $A A'_{-} A^{-1}=-A'_{-}$, and thus
\begin{equation}
    \det(A'_{-})=(-1)^d\det(A'_{-}),
\end{equation}
where $d=n+m$ is the dimension of the Hilbert space. 
This is only possible if $\det(A'_{-})=0$ or $A'_{-}$ is even dimensional.
In the case when $\det(A'_{-})=0$, the null subspace of $A'_{-}$, $\ker(A'_{-})$, can be ignored as it does not  contribute to the Bell inequality violation. This implies that the effective subspace of such operators can be truncated to   an even-dimensional subspace. Finally, as $A'=A'_{+}+A'_{-}$, the effective subspaces of $A'_{+}$ and $A'$ are also be truncated to the same even-dimensional subspace as $A'_{-}$.
\end{proof}
Moreover, the lemma \ref{spectrum} yields the following succinct parameterization of the two mutually anti-commuting components of any projective observable,
\begin{corollary}
Given two even-dimensional anti-commuting operators $A_{+}$ and $A_{-}$ such that $(A'_{+})^2+(A'_{-})^2=\mathds{1}$, they can be written in the form,
\begin{align}
A'_{+|\theta_i}=\sin{\theta_i}B_{+|\theta_i} & & A'_{-|\theta_i}=\cos{\theta_i}B_{-|\theta_i}
\end{align}
 when restricted to the subspace $E^{A'_{-}}_{+\theta_i}\oplus E^{A'_{-}}_{-\theta_i}$ with corresponding eigenvalues $\pm\sin(\theta_i)$ for $A_{+}$ and $\pm\cos(\theta_i)$ for $A_{-}$, such that the operators $B_{\pm|\theta_i}$ are traceless and projective.
\end{corollary}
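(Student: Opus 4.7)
The plan is to work subspace-by-subspace: restrict everything to the direct-sum eigenspace $V_i:=E^{A'_{-}}_{+\theta_i}\oplus E^{A'_{-}}_{-\theta_i}$, which by the preceding corollary is even dimensional and on which $A'_{-}$ has only the two eigenvalues $\pm\cos\theta_i$. The first observation I would establish is that $A'_{+}$ stabilizes $V_i$: the identity $(A'_{+})^2+(A'_{-})^2=\mathds{1}$ rewrites as $(A'_{+})^2=\mathds{1}-(A'_{-})^2$, a polynomial in $A'_{-}$, so $A'_{+}$ commutes with every spectral projector of $(A'_{-})^2$; in particular it preserves $V_i$, the $\cos^2\theta_i$ eigenspace of $(A'_{-})^2$.

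On $V_i$ I would then define $B_{-|\theta_i}$ by $A'_{-}\big|_{V_i}=\cos\theta_i\,B_{-|\theta_i}$, which is well defined whenever $\cos\theta_i\neq 0$. Since $A'_{-}\big|_{V_i}$ has spectrum $\{+\cos\theta_i,-\cos\theta_i\}$, the operator $B_{-|\theta_i}$ has spectrum $\{+1,-1\}$, hence $B_{-|\theta_i}^2=\mathds{1}_{V_i}$, i.e., it is projective. Restricting $(A'_{+})^2+(A'_{-})^2=\mathds{1}$ to $V_i$ gives $(A'_{+}\big|_{V_i})^2=\sin^2\theta_i\,\mathds{1}_{V_i}$, so setting $A'_{+}\big|_{V_i}=\sin\theta_i\,B_{+|\theta_i}$ (for $\sin\theta_i\neq 0$) yields $B_{+|\theta_i}^2=\mathds{1}_{V_i}$, making $B_{+|\theta_i}$ projective as well.

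Tracelessness follows from the anti-commutator $\{A'_{+},A'_{-}\}=0$ of Lemma~\ref{BLOCK2}. Restricted to $V_i$ this reads $\sin\theta_i\cos\theta_i\{B_{+|\theta_i},B_{-|\theta_i}\}=0$, so whenever both $\sin\theta_i$ and $\cos\theta_i$ are nonzero the two $B$'s anti-commute. Multiplying $B_{-|\theta_i}B_{+|\theta_i}=-B_{+|\theta_i}B_{-|\theta_i}$ on the right by $B_{-|\theta_i}$ and using $B_{-|\theta_i}^2=\mathds{1}_{V_i}$ gives $B_{-|\theta_i}B_{+|\theta_i}B_{-|\theta_i}=-B_{+|\theta_i}$, and cyclicity of the trace yields $\mathrm{Tr}(B_{+|\theta_i})=-\mathrm{Tr}(B_{+|\theta_i})=0$; a symmetric argument handles $B_{-|\theta_i}$, and this also reconfirms that $V_i$ is even dimensional.

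The main obstacle I anticipate is the degenerate strata $\theta_i\in\{0,\pi/2\}$, where either $A'_{+}$ or $A'_{-}$ vanishes on $V_i$ and the defining relation no longer pins down the corresponding $B$. In those cases $B_{\pm|\theta_i}$ must be chosen by hand so as to remain projective, traceless, and anti-commuting; even-dimensionality of $V_i$ makes this possible, for instance by picking orthonormal bases in which each $B_{\pm|\theta_i}$ acts as $\mathrm{diag}(\mathds{1},-\mathds{1})$ with the two blocks swapped by the other. This non-uniqueness on the degenerate strata is genuine, and the subsequent self-testing arguments will need to carry the freedom along rather than try to eliminate it.
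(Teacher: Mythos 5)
Your proposal is correct in substance and follows essentially the same route as the paper: restrict to $V_i=E^{A'_{-}}_{+\theta_i}\oplus E^{A'_{-}}_{-\theta_i}$, rescale the restricted operators by $\sin\theta_i$ and $\cos\theta_i$ to get $\pm1$-spectrum (hence projective) operators, and derive tracelessness from the inherited anti-commutation. You are in fact more careful than the paper on two points it glosses over: why $A'_{+}$ preserves $V_i$ at all, and what happens at the degenerate angles $\theta_i\in\{0,\pi/2\}$ where one of the defining relations cannot be inverted. One justification does need repair, though: from $(A'_{+})^2=\mathds{1}-(A'_{-})^2$ being a polynomial in $A'_{-}$ you may conclude that $(A'_{+})^2$ commutes with the spectral projectors of $(A'_{-})^2$, but not that $A'_{+}$ itself does (take $A'_{+}=\sigma_x$, $A'_{-}=0\cdot\sigma_z$ versus $\sigma_x^2=\mathds{1}$ to see the inference pattern fail in general). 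The invariance of $V_i$ under $A'_{+}$ instead follows in one line from the anti-commutation relation: $A'_{+}(A'_{-})^2=-A'_{-}A'_{+}A'_{-}=(A'_{-})^2A'_{+}$, so $A'_{+}$ commutes with $(A'_{-})^2$ and hence with the projector onto its $\cos^2\theta_i$-eigenspace. With that substitution your argument is complete, and your cyclicity-of-trace derivation of $\mathrm{Tr}(B_{\pm|\theta_i})=0$ is cleaner than the paper's appeal to eigenvalues occurring in pairs.
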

\begin{proof}
We have already shown the dimension of the combined eigenspaces $E^{A'_{-}}_{\lambda_i}\oplus E^{A'_{-}}_{-\lambda_i}$ is even. 
Then the two anti-commuting operators $A'_{+}$ and $A'_{-}$ are restricted to this eigenspace, denoted by $A'_{+\vert\theta_i}$ and $A'_{-\vert\theta_i}$, will only have eigenvalues $\pm\sin{\theta_i}$ and $\pm\cos{\theta_i}$ respectively, while still satisfying $(A'_{+|\theta_i})^2+(A'_{-|\theta_i})^2=\mathds{1}$.
It is then possible to write the same relation in terms of scaled operators,
as 
\begin{equation}
    \cos^2{\theta_i} B_{+|\theta_i}^2 + \sin^2{\theta_i} B_{-|\theta_i}^2 = \mathds{1}.
\end{equation}
As a result the operators $B_{\pm|\theta_i}$ have eigenvalues $\pm1$ that occur in pairs. Therefore, $\Tr(B_{\pm|\theta_i})=0$ and $B_{\pm|\theta_i}^2=\mathds{1}$.
\end{proof}
Using the above results, in such an even-dimensional subspace the following lemma holds.
\begin{lemma}
\label{ANTICOM}
Given any two traceless and projective anti-commuting observables $B_+$ and $B_{-}$, the observable  $\cos{\alpha}B_+ +\sin\alpha B_{-}$ is also traceless and projective.

\end{lemma}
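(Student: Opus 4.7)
The plan is to verify the two claimed properties of the observable $C_\alpha := \cos\alpha\, B_+ + \sin\alpha\, B_-$ directly from the three hypotheses at our disposal: $B_\pm^2 = \mathds{1}$, $\{B_+, B_-\}=0$, and $\operatorname{Tr}(B_\pm)=0$. Tracelessness is immediate from the linearity of the trace: $\operatorname{Tr}(C_\alpha) = \cos\alpha\,\operatorname{Tr}(B_+) + \sin\alpha\,\operatorname{Tr}(B_-) = 0$. That step is essentially free.

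For the projective property, I would expand $C_\alpha^2$ and collect terms. The diagonal terms give $\cos^2\alpha\, B_+^2 + \sin^2\alpha\, B_-^2 = (\cos^2\alpha + \sin^2\alpha)\mathds{1} = \mathds{1}$ using the projective property of $B_\pm$ and the Pythagorean identity, while the cross terms assemble into $\cos\alpha\sin\alpha\,\{B_+,B_-\}$, which vanishes by anti-commutativity. Hence $C_\alpha^2 = \mathds{1}$, so $C_\alpha$ is projective.

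The argument is purely algebraic and there is no real obstacle: the anti-commutation relation is precisely what one needs to cancel the cross term, and the identity $\cos^2\alpha + \sin^2\alpha = 1$ is what one needs so that the square parameters collapse to $\mathds{1}$. The only thing worth noting is that Hermiticity of $C_\alpha$ is automatic since $B_+$, $B_-$ are observables and the coefficients $\cos\alpha,\sin\alpha$ are real, so together with $C_\alpha^2=\mathds{1}$ this establishes that $C_\alpha$ is a genuine $\pm 1$-valued observable, as required for the subsequent use of this lemma in reducing generic two-setting observables to the canonical Pauli-like form.
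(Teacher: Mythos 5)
Your proof is correct and matches the paper's argument essentially verbatim: both expand $(\cos\alpha\, B_+ + \sin\alpha\, B_-)^2$, use anti-commutation to kill the cross term and $\cos^2\alpha+\sin^2\alpha=1$ for the diagonal terms, and get tracelessness from linearity of the trace. Your added remark that Hermiticity is automatic is a small bonus the paper leaves implicit.
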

\begin{proof}
Expanding the square,
\begin{align}
(\cos\alpha & B_+ + \sin\alpha B_{-} )^2\nonumber\\
=&\cos^2\alpha B_{+}^2 + \sin^2\alpha B_{-}^2 + \cos\alpha\sin\alpha\acomm{B_{+}}{B_{-}} \nonumber\\
=&\mathds{1}.
\end{align}
And the trace is simply,
\begin{equation} 
\cos\alpha\Tr(B_{+})+\sin\alpha\Tr(B_{-})=0.
\end{equation}
\end{proof}

The above lemma shows that whenever any inequality is maximally violated the effective local dimension of any subsystem must be even dimensional. Now, we present the main ingredient of our self-testing proof technique, i.e., anti-diagonal representation for local observables of each party up to local isometries. 

\begin{theorem}
\label{3TM}
Given any three binary outcome traceless and projective observables  $A$, $B_{+}$, and $B_{-}$, such that $\comm{A}{B_{+}}=0$, $\acomm{A }{B_{-}}=0$, and $\acomm{B_{+}}{B_{-}}=0$, then these operators have a simultaneous anti-diagonal matrix representation.  
\end{theorem}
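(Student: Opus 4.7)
The plan is to reduce the triple $(A, B_+, B_-)$ to a canonical form via a simultaneous decomposition of the Hilbert space into two-dimensional invariant subspaces on which each operator takes a Pauli-like form, and then to choose a basis ordering that converts this block structure into literal anti-diagonality. Since $A$ and $B_+$ commute and are both projective involutions, they admit a joint spectral decomposition into at most four simultaneous eigenspaces $E_{\epsilon_1 \epsilon_2}$ with $\epsilon_1, \epsilon_2 \in \{\pm 1\}$. The anti-commutations $\acomm{B_-}{A} = \acomm{B_-}{B_+} = 0$, together with $B_-^2 = \mathds{1}$, imply that $B_-$ restricts to an isometric bijection $E_{\epsilon_1 \epsilon_2} \to E_{-\epsilon_1,-\epsilon_2}$; hence $\dim E_{++} = \dim E_{--} = p$ and $\dim E_{+-} = \dim E_{-+} = q$, and the Hilbert space splits as a direct sum of $m = p+q$ two-dimensional $B_-$-invariant subspaces, consistent with the even-dimensionality already established by the preceding corollary.

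Concretely, one picks orthonormal bases $\{\ket{\alpha_i}\}_{i=1}^p$ of $E_{++}$ and $\{\ket{\beta_j}\}_{j=1}^q$ of $E_{+-}$ and extends them via $B_-$ to orthonormal bases of $E_{--}$ and $E_{-+}$. On the $2$-dimensional block spanned by $(\ket{\alpha_i}, B_-\ket{\alpha_i})$ the triple acts as $(\sigma_z,\sigma_z,\sigma_x)$, while on $(\ket{\beta_j}, B_-\ket{\beta_j})$ it acts as $(\sigma_z,-\sigma_z,\sigma_x)$. Within every block, the rotation $R_x(\pi/2) \propto \mathds{1} - i\sigma_x$ preserves $\sigma_x$ and conjugates $\sigma_z$ into an anti-diagonal matrix proportional to $\sigma_y$; after applying this rotation in every block, all three operators restrict to anti-diagonal $2 \times 2$ matrices on every block simultaneously.

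The final step is to reindex the global basis as $\ket{\alpha_1},\ldots,\ket{\alpha_p},\ket{\beta_1},\ldots,\ket{\beta_q},B_-\ket{\beta_q},\ldots,B_-\ket{\beta_1},B_-\ket{\alpha_p},\ldots,B_-\ket{\alpha_1}$, which places each paired vector $\ket{v}$ and its image $B_-\ket{v}$ at positions $(k, d+1-k)$ symmetric about the centre. In this ordering, every nonzero matrix entry of $A$, $B_+$, $B_-$ lies on the anti-diagonal of the $d \times d$ matrix, yielding the simultaneous anti-diagonal representation claimed.

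The main obstacle I expect is the last geometric step: bridging between block-wise and literal anti-diagonality. The within-block rotation alone produces only block-diagonal matrices whose $2\times 2$ blocks are anti-diagonal—not an anti-diagonal matrix overall—while a basis permutation alone cannot turn an originally diagonal operator like $A$ into an anti-diagonal one. The essential trick is to combine the two manipulations—first use $R_x(\pi/2)$ in each block to render all three operators $\sigma_x$- or $\sigma_y$-like, and only then reorder globally to align the paired vectors onto positions symmetric about the centre of the matrix.
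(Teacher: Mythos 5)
Your proposal is correct and follows essentially the same route as the paper's proof: jointly diagonalize the commuting pair $A$, $B_{+}$, use $B_{-}$ to pair the eigenspaces $E_{\epsilon_1\epsilon_2}$ with $E_{-\epsilon_1,-\epsilon_2}$, reduce to $2\times 2$ blocks on which the triple is $(\pm\sigma_z,\pm\sigma_z,\sigma_x)$, and finish with a single-qubit rotation plus a global basis ordering that places each pair at positions symmetric about the centre. The only cosmetic difference is the choice of rotation ($R_x(\pi/2)$ fixing $\sigma_x$ and sending $\sigma_z\to\pm\sigma_y$, versus the paper's $2\pi/3$ rotation about $(1,1,1)$ cycling $\sigma_z\to\sigma_x\to\sigma_y$), and you are in fact slightly more explicit than the paper about why block-wise anti-diagonality plus the symmetric reordering yields literal anti-diagonality of the full matrix.
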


\begin{proof}
As $\comm{A}{B_{+}}=0$ we can take the dimension of the subspace for which the eigenvalues of $A$ and $B_{+}$ are equal to be $2d_1$, and $2d_2$ for the subspace where the eigenvalues differ. Consequently, without loss of generality, by Lemma \ref{spectrum} the operators $A$, $B_{+}$, and $B_{-}$ have the following matrix representations,
\begin{eqnarray}
    A=&&\left(\begin{array}{cccc}
            \mathds{1}_{d_1} & \cdot & \cdot & \cdot\\
            \cdot & \mathds{1}_{d_2} & \cdot & \cdot\\
            \cdot & \cdot & -\mathds{1}_{d_2} & \cdot \\
            \cdot & \cdot & \cdot & -\mathds{1}_{d_1}
    \end{array}\right),\nonumber\\
    B_{+}=&&\left(\begin{array}{cccc}
            \mathds{1}_{d_1} & \cdot & \cdot & \cdot \\
            \cdot & -\mathds{1}_{d_2} & \cdot & \cdot \\
            \cdot & \cdot & \mathds{1}_{d_2}& \cdot \\
            \cdot & \cdot & \cdot & -\mathds{1}_{d_1}
    \end{array}\right),\nonumber\\
    B_{-}=&&\left(\begin{array}{cccc}
            \cdot & \cdot & \cdot & U_1\\
            \cdot & \cdot & U_2& \cdot \\
            \cdot & U_2^\dagger & \cdot & \cdot \\
            U_1\dagger & \cdot & \cdot & \cdot 
    \end{array}\right).
\end{eqnarray}
Since $B_{-}$ is projective, $U_1$ and $U_2$ must be unitary. Thus, without altering $A$ or $B_{+}$ we can take four unitaries $V_1$, $V_2$, $V_3$, $V_4$, each acting on a different block, such that $V_1U_1V_4^\dagger=J_{d_1}$ and $V_2U_2V_3^\dagger=J_{d_2}$, where $J_{d}$ is the row-reversed $d\times d$ identity matrix. 

Consequently, we can now restrict ourselves to considering any one of the $d_1+d_2$ two-dimensional subspaces, on which $A$ and $B_{+}$ are represented by $\pm\sigma_z$, while $B_{-}$ is projected onto $\sigma_x$.
Finally, in each of these subspaces applying a rotation by $\frac{2\pi}{3}$ with respect to axis $(1,1,1)$ yields, 
\begin{equation}
U=\frac{1}{2}\left(\begin{array}{cc}-1+\iota&1+\iota\\-1+\iota&-1-\iota\end{array}\right),    
\end{equation}
transforming $\sigma_z\rightarrow\sigma_x\rightarrow\sigma_y$, and bringing all three operators to strictly antidiagonal form. 
\end{proof}
Summarizing, the theorem \ref{3TM} along with lemmas \ref{spectrum} and \ref{ANTICOM} allow us to take the first observable of each party $A^{(j)}$ to be equivalent to $\sigma_x$ on all relevant two dimensional subspaces, while the second operator $A'^{(j)}$ can be taken to be $A'^{(j)}=\cos{\theta_{j}}\sigma_x+\sin{\theta_{j}}\sigma_y$.
We would like to stress the key significance of this theorem. As the Bell operators considered in this article are tensor products of local observables for each party, and via theorem \ref{3TM} each local observable can have anti-diagonal matrix representation, the Bell operator itself has anti-diagonal matrix representation. Consequently, only the states with maximal (in modulo) antidiagonal elements, or their mixtures, can provide the maximal contribution to the Bell violation. In particular, this implies the maximum violation two setting $N$-partite correlation Bell inequalities necessitates $N$-qubit GHZ states, or their mixtures, as we demonstrate in the following section.  
\begin{comment}
Our Bell functions take the mean values of tensor products of local observables for each observer. In each subspace discussed after the proof of Lemma \ref{spectrum} we shall take the operators, $A^{[j]}$ and 
$A'^{[j]}=\cos{\theta_k^{[j]}}B_{+}^{[j]}+\sin{\theta_k^{[j]}}B_{-}^{[j]}$ from Lemma \ref{ANTICOM}. By Theorem \ref{3TM} both local observables of each observer are antidiagonal, and so is any operator that we need to consider to evaluate the Bell function.  Thus only the states with maximal (in modulo) antidiagonal elements, or they mixtures, can provide the maximal contribution. This means that for 2 settings $N$-partite correlation Bell inequalities we can draw self-testing argument only for $N$-qubit GHZ states, or their mixtures \ref{ST3W}. 
\end{comment}
\section{Self-testing statements for multipartite inequalities }
\label{ST3W}
In this section, we use the tools developed in the previous section to obtain self-testing statements for linear MABK family \eqref{MABK} of $N$ party inequalities, and sketch the proofs for the self-testing statements for all distinct equivalence classes of tripartite WWW\.ZB facet inequalities \eqref{WWZB}. Finally, we obtain self-testing statements for Uffnick's family \eqref{Uff} of non-linear $N$ party inequalities. 
\subsection{$N$ party MABK inequalities}
\begin{theorem}\label{MABKTheo}
In order to achieve maximal quantum violation of a $N$-party MABK inequality, $\langle \mathcal{M}_{N}\rangle=2^{\frac{N-1}{2}}$, the parties must share a $N$ qubit GHZ state $\ket{GHZ_N}=\frac{1}{\sqrt{2}}(\ket{0}^{\otimes N}+ e^{\iota \phi_{ N}}\ket{1}^{\otimes N})$ and perform maximally anti-commuting projective measurements $A^{(j)}=\sigma_x$ and $A'^{(j)}=\sigma_y$ (upto local auxiliary systems and local isometries). 
\end{theorem}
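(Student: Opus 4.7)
The plan is to use the canonical form from Section \ref{maths} to turn the operator-norm calculation into a finite combinatorial optimisation over antidiagonal $2^N{\times}2^N$ blocks. By Theorem \ref{3TM} together with Lemmas \ref{BLOCK2}, \ref{ANTICOM} and the associated corollaries, we may restrict, up to local isometries and auxiliary systems, to an even-dimensional local Hilbert space on which each party's observables have the block-diagonal form $A^{(j)}=\sigma_x$, $A'^{(j)}=\cos\theta_{j,k}\,\sigma_x+\sin\theta_{j,k}\,\sigma_y$ on each $2{\times}2$ block indexed by $k$. Thus $\mathcal{M}_N$ decomposes as a direct sum over tensor products of these blocks, and since $\sigma_x,\sigma_y$ are antidiagonal in the computational basis, $\mathcal{M}_N$ is antidiagonal on every such $2^N$-dimensional summand. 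Consequently it further splits as a direct sum of $2{\times}2$ antidiagonal blocks on the invariant subspaces $\mathrm{span}\{|z\rangle,|\bar z\rangle\}$, where $\bar z$ denotes the bit-flip of $z\in\{0,1\}^N$, so that its operator norm equals $\max_z|\beta_z|$ with $\beta_z=\langle z|\mathcal{M}_N|\bar z\rangle$.

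Next I would compute $\beta_z$ explicitly. Setting $\alpha=(\tfrac{1-i}{2})^{N-1}$ and $T=\bigotimes_j(A^{(j)}+iA'^{(j)})$ so that $\mathcal{M}_N=\tfrac12(\alpha T+\alpha^* T^\dagger)$, the antidiagonal entries of $T$ factorise as $T_{z,\bar z}=p_z:=\prod_j c_j(z_j)$ with $c_j(0)=1+\sin\theta_j+i\cos\theta_j$ and $c_j(1)=1-\sin\theta_j+i\cos\theta_j$. A brief direct calculation yields $p_z p_{\bar z}=(2i)^N\prod_j\cos\theta_j$ and shows that $\alpha^2 p_z p_{\bar z}$ is purely imaginary; hence the cross term in $|\beta_z|^2=\tfrac14|\alpha p_z+\alpha^* p^*_{\bar z}|^2$ cancels, leaving the clean identity $|\beta_z|^2=|\alpha|^2(|p_z|^2+|p_{\bar z}|^2)/4=(|p_z|^2+|p_{\bar z}|^2)/2^{N+1}$.

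The bound then follows from the sum rule $\sum_z|p_z|^2=\prod_j(|c_j(0)|^2+|c_j(1)|^2)=\prod_j 4=4^N$, which gives $|\beta_z|^2\le 4^N/2^{N+1}=2^{N-1}$ and therefore $\mean{\mathcal{M}_N}\le 2^{(N-1)/2}$. Saturation forces $|p_{z'}|^2=0$ for every $z'\notin\{z,\bar z\}$; since $|c_j(0)|^2=2(1+\sin\theta_j)$ and $|c_j(1)|^2=2(1-\sin\theta_j)$, this happens iff $\sin\theta_j=\pm1$ for every $j$, i.e., $A'^{(j)}=\pm\sigma_y$ on the active block. Composing with the local isometry that identifies $|z\rangle\leftrightarrow|0\rangle^{\otimes N}$ (and absorbs the signs of the $\sin\theta_j$ into bit-flips) reduces the setting to $A^{(j)}=\sigma_x$, $A'^{(j)}=\sigma_y$, and the maximum-eigenvalue eigenvector of the surviving $2{\times}2$ block is then exactly the GHZ state $\tfrac{1}{\sqrt 2}(|0\rangle^{\otimes N}+e^{i\phi_N}|1\rangle^{\otimes N})$. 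I expect the main obstacle to be establishing the cross-term cancellation: without it one only obtains a triangle-inequality bound that is loose by a factor of $\sqrt 2$, and it is the one step where the specific phase structure of the MABK operator (rather than merely its tensor-product shape) is used.
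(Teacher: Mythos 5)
Your proof is correct, and while it starts from the same place as the paper --- the canonical form $A^{(j)}=\sigma_x$, $A'^{(j)}=\cos\theta_j\sigma_x+\sin\theta_j\sigma_y$ from Theorem \ref{3TM} and the resulting antidiagonal block structure of $\mathcal{M}_N$ --- the core estimate is genuinely different. The paper restricts (``for simplicity'') to the block $\mathrm{span}\{\ket{0}^{\otimes N},\ket{1}^{\otimes N}\}$, bounds the antidiagonal entry by the plain triangle inequality $|\alpha p_z+\overline{\alpha p_{\bar z}}|\le|\alpha|(|p_z|+|p_{\bar z}|)$, rewrites the two moduli as $2^{N}\prod_j|\cos(\tfrac{\pi}{4}+\tfrac{\theta_j}{2})|$ and $2^{N}\prod_j|\sin(\tfrac{\pi}{4}+\tfrac{\theta_j}{2})|$, and then drops all but two factors to reduce to the two-angle identity $|\sin a\sin b|+|\cos a\cos b|=\max\{|\cos(a+b)|,|\cos(a-b)|\}\le1$, with saturation forced by the arbitrariness of the retained pair $(i,j)$. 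You instead prove the exact cross-term cancellation $\alpha^2p_zp_{\bar z}=2i\prod_j\cos\theta_j\in i\mathbb{R}$ (which checks out: $c_j(0)c_j(1)=2i\cos\theta_j$ and $\alpha^2=(-i/2)^{N-1}$), so that $|\beta_z|^2=(|p_z|^2+|p_{\bar z}|^2)/2^{N+1}$ holds with equality, and then close with the Parseval-type sum rule $\sum_z|p_z|^2=\prod_j(|c_j(0)|^2+|c_j(1)|^2)=4^N$. This buys you two things the paper's route does not give as cleanly: the bound holds uniformly over every block $\mathrm{span}\{\ket{z},\ket{\bar z}\}$ at once, and the saturation analysis is global and immediate ($|c_j(0)|^2|c_j(1)|^2=4(1-\sin^2\theta_j)$ must vanish for every $j$, else mass leaks to strings outside $\{z,\bar z\}$), rather than relying on the pairwise-dropping argument. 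One remark: your closing worry is misplaced --- the paper's triangle-inequality route is not loose by $\sqrt2$, because the subsequent product bound $\prod_j|\cos(\cdot)|+\prod_j|\sin(\cdot)|\le1$ is itself tight; the $\sqrt2$ loss would only occur if one replaced the cross term by Cauchy--Schwarz, which neither you nor the paper does. The extraction of the GHZ state from the surviving $2\times2$ antidiagonal block, and the treatment of auxiliary/junk degrees of freedom, match the paper's.
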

\begin{proof}
From lemma \ref{spectrum} and theorem \ref{3TM}, without loss of generality, the local observables of any party $j$ can be taken to be, 
\begin{eqnarray} \nonumber
    & A^{(j)}&=\sigma_x, \\
    & A'^{(j)}&=\cos{\theta_{j}}\sigma_x+\sin{\theta_{j}}\sigma_y,
\end{eqnarray}
acting on the effective two dimensional subspace. This parametrization implies that the MABK operator \eqref{MABK} $\mathcal{M}_{N}$ has the following anti-diagonal matrix representation,
\begin{eqnarray} \label{MABKMatrix}
 \mathcal{M}_{N}  =\text{adiag} \begin{pmatrix}
\frac{1}{2}\Big((\frac{1-\iota}{2})^{N-1}\Pi_{j=1}^{N}(1+\iota e^{-\iota\theta_j})\\
\hspace{21pt}+(\frac{1+\iota}{2})^{N-1}\Pi_{j=1}^{N}(1-\iota e^{-\iota \theta_j})\Big)\\
\vdots\\
\frac{1}{2}\Big((\frac{1-\iota}{2})^{N-1}\Pi_{j=1}^{N}(1+\iota e^{\iota\theta_j})\\
\hspace{21pt}+(\frac{1+\iota}{2})^{N-1}\Pi_{j=1}^{N}(1-\iota e^{\iota \theta_j})\Big)
\end{pmatrix},
\end{eqnarray}
where $\text{adiag}$ represents a matrix with non-zero values only on the anti-diagonal. It is easy to see for any combination, $\forall j\in\{1,\ldots,N\}:\theta_j=\pm\frac{\pi}{2}$, one of the anti-diagonal element attains the maximum absolute value of $2^{\frac{N-1}{2}}$ while the others vanish.  

Let us, for simplicity, consider the subspace spanned by $\{\ket{0}^{\otimes N}, \ket{1}^{\otimes N}\}$. On the subspace we consider the state $\ket{\psi_N}=\alpha\ket{0}^{\otimes N}+ \beta\ket{1}^{\otimes N}$ which effectively yields a weighted sum of the top and bottom anti-diagonal elements of the matrix \eqref{MABKMatrix}. Consequently, the expectation value of the Hermitian operator $\mathcal{M}_N$ for the state $\ket{\psi_N}$ has the expression,
\begin{eqnarray} \nonumber
&\bra{\psi_N}\mathcal{M}_{N}\ket{\psi_N}
= &2\Re\{\overline{\alpha}\frac{1}{2}\Big((\frac{1-\iota}{2})^{N-1}\Pi_{j=1}^{N}(1+\iota e^{-\iota\theta_j}) \\
& & +(\frac{1+\iota}{2})^{N-1}\Pi_{j=1}^{N}(1-\iota e^{-\iota \theta_j})\Big)\beta\},
\end{eqnarray}
Using the fact that $\forall \alpha,\beta \in \mathbb{C}: \ \Re{\alpha \beta}\leq |\alpha \beta|= |\alpha||\beta|$ we bound $\bra{\psi_N}\mathcal{M}_{N}\ket{\psi_N}$ from above in the following way,
\begin{eqnarray} \nonumber
&\bra{\psi_N}\mathcal{M}_{N}\ket{\psi_N} \leq & 2|\overline{\alpha}||\beta||\frac{1}{2}\Big((\frac{1-\iota}{2})^{N-1}\Pi_{j=1}^{N}(1+\iota e^{-\iota\theta_j}) \nonumber \\
& & +(\frac{1+\iota}{2})^{N-1}\Pi_{j=1}^{N}(1-\iota e^{-\iota \theta_j})\Big)|. 
\end{eqnarray}
As $|\alpha|^2+|\beta|^2=1$, the maximum value of the above expression can only be attained for $|\alpha|=|\beta|=\frac{1}{\sqrt{2}}$, which picks out $\ket{GHZ_N}$ as the shared state. Consequently, we retrieve the following upper bound, 
\begin{eqnarray}
&\bra{\psi_N}\mathcal{M}_{N}\ket{\psi_N}\leq & |\frac{1}{2}\Big((\frac{1-\iota}{2})^{N-1}\Pi_{j=1}^{N}(1+\iota e^{-\iota\theta_j})\nonumber \\
 & & +(\frac{1+\iota}{2})^{N-1}\Pi_{j=1}^{N}(1-\iota e^{-\iota \theta_j})\Big)|.
\end{eqnarray}
Now, as $|(\frac{1+\iota}{2})^{N-1}|=|(\frac{1-\iota}{2})^{N-1}|=\frac{1}{2^{\frac{N-1}{2}}}$ and $(1+\iota e^{-\iota \theta_j})=2\cos{(\frac{\pi}{4}+\frac{\theta_j}{2})}$ and $(1-\iota e^{-\iota \theta_j})=2\sin{(\frac{\pi}{4}+\frac{\theta_j}{2})}$, we can further upper bound the above expression as,
\begin{eqnarray}
&\bra{\psi_N}\mathcal{M}_{N}\ket{\psi_N} \leq & 2^{\frac{N-1}{2}}\Big(\Pi_{j=1}^{N}|\cos{\frac{\pi}{4}+\frac{\theta_j}{2}}|\nonumber \\
& &\hspace{25pt} + \Pi_{j=1}^{N}|\sin{\frac{\pi}{4}+\frac{\theta_j}{2}}|\Big) \nonumber.
\end{eqnarray}
Dropping positive terms corresponding to any $N-2$ parties we retrieve the simplified upper bound,
\begin{eqnarray}
&\bra{\psi_N}\mathcal{M}_{N}\ket{\psi_N} & \leq  \nonumber {2^{\frac{N-1}{2}}}\Big(\left|\sin\left(\frac{2\theta_{i}+\pi}{4}\right)\sin\left(\frac{2\theta_{j}+\pi}{4}\right)\right| \\ 
&   & \hspace{21pt} + \left|\cos\left(\frac{2\theta_{i}+\pi}{4}\right)\cos\left(\frac{2\theta_{j}+\pi}{4}\right)\right|\Big)\nonumber \\ 
&  & =  2^{\frac{N-1}{2}} \max\Big\{\Big|\cos(\frac{\theta_{i}+\theta_{j}+\pi}{2})\Big|,\nonumber\\
& &  \hspace{27pt} \Big|\cos(\frac{\theta_{i}-\theta_{j}}{2}n)\Big|\Big\} \nonumber \\
& & \leq  2^{\frac{N-1}{2}},
\end{eqnarray}
As the choice of $i,j \in \{1,\ldots,N\}$ is completely arbitrary, 
the inequality can only be saturated when $\forall j\in\{1,\ldots,N\}:\theta_j=\pm\frac{\pi}{2}$, i.e., for each party the local observables maximally anti-commute and the state that maximally violate the quntum bound is $\ket{GHZ_N}$ up to auxiliarly degrees of freedom on which the measurements act trivially, and local basis transformations. This implies that the actual shared state could be of the form $\ket{GHZ_N}\bigotimes\ket{\Psi}$ where the arbitrary state $\ket{\Psi}$ on auxiliary degrees of freedom does not contribute to the operational Bell violation, and thus is referred to as the junk state.

\end{proof}

The proof technique readily applies to the Svetlichny family of $N$-party inequalities as they are composed of the $N$-party MABK inequalities. Furthermore, the proof technique enables self-testing of a much broader class of $N$-party WWW\.ZB inequalities. To demonstrate this, we sketch the proofs for self-testing of tripartite  WWW\.ZB inequalities \eqref{WWZB}. 

\subsection{Tripartite  WWW\.ZB inequalities }\label{WWWZB1}
For the tripartite scenarios where each party has two binary outcomes observables the local real polytope is characterized by $256$ facet correlation inequalities which can be grouped into four equivalent classes of non-trivial inequalities, 

The first class is composed of correlation inequalities equivalent (up-to relabeling) to the \textit{Mermin's inequality} $\mean{\mathcal{M}_3}\leq1$,
\begin{eqnarray}
& \frac{1}{2}\Big(&\mean{(A^{(1)}A^{(2)}A'^{(3)}}+\mean{A^{(1)}A'{(2)}A^{(3)}}\nonumber \\
& &+\mean{A'^{(1)}A^{(2)}A^{(3)}}-\mean{A'^{(1)}A'^{(2)}A'^{(3)})}\Big)\leq 1. \nonumber
\end{eqnarray}
\begin{comment}
For Mermin inequalities  we have a mean value of an antidiagonal operator with nonzero entries as follows 
\begin{eqnarray}
\frac{1}{2}(e^{\iota{k_1}\theta^{(1)}}+e^{ ik_2\theta^{(2)}}+e^{\iota k_3\theta^{(3)}}-e^{\iota(k_1\theta^{(1)}+k_2\theta^{(2)}+k_3\theta^{(3)})})
\end{eqnarray}
Where $(k_1,k_2,k_3\in\{-1,1\})$ and the moduli of these expressions are pairwise for $k_1\theta^{(1)}=k_2\theta^{(2)}=k_3\theta^{(3)}=\pm\frac{\pi}{2}$, while the other six vanish. Thus anti-commutation of local observables is pinpointed as a necessary ingredient of the quantum violation.
\end{comment}
For these inequalities the proof of Theorem \ref{MABKTheo} directly applies. As a consequence, we retrieve the tripartite GHZ state $\ket{GHZ_3}$ as well as maximally anti-commuting local observables as necessary ingredients for the maximal quantum violation 2.   

The second equivalence class of tripartite inequalities is that of  \textit{unbalanced inequalities}, which have the following generic form.
 \begin{eqnarray}
 & \Big(&3\mean{A^{(1)}A^{(2)}A^{(3)}}+\mean{A^{(1)}A^{(2)}A'^{(3)}}\nonumber\\
 & &+\mean{A^{(1)}A'^{(2)}A^{(3)}}+\mean{A'^{(1)}A^{(2)}A^{(3)}}\nonumber\\
 & & -\mean{A^{(1)}A'^{(2)}A'^{(3)}}-\mean{A'^{(1)}A^{(2)}A'^{(3)}}\nonumber\\
 & &-\mean{A'^{(1)}A'^{(2)}A^{(3)}}+\mean{A'^{(1)}A'^{(2)}A'^{(3)}}\Big)\leq 4.\nonumber
 \end{eqnarray}
 The corresponding operator has a anti-diagonal matrix representation with values of the form,
 \begin{eqnarray}
 4+(-1+e^{\iota k_1\theta_{1}})(-1+e^{\iota k_2\theta_{2}})(-1+e^{\iota k_3\theta_{3}}),
 \end{eqnarray}
Where $(k_1,k_2,k_3\in\{-1,1\})$. Unlike the MABK class of inequalties the absolute values of these anti-diagonal terms are maximised when $\cos\theta_{j}=-\frac{1}{3}$. Thus, yet again, the tripartite GHZ state $\ket{GHZ_3}$ is distinguished, but non maximally anti-commuting operators are required for maximal quantum violation of these inequalities is $\frac{20}{3}$. 
 
 Next, we have the equivalence class of \textit{extended CHSH inequalities} of the form, 
\begin{eqnarray}
&\frac{1}{2}\Big(&\mean{A^{(1)}A^{(2)}A^{(3)}}+\mean{A'^{(1)}A^{(2)}A^{(3)}}\nonumber\\
& &+\mean{A^{(1)}A'^{(2)}A'^{(3)}}-\mean{A'^{(1)}A'^{(2)}A'^{(3)}}\Big)\leq 1\nonumber.
\end{eqnarray}
For these inequalities, the operator in the anti-diagonal matrix representation has elements of the form,  
\begin{eqnarray}
\frac{1}{2}(1+e^{\iota k_1\theta_{1}}+e^{\iota (k_2\theta_{2}+k_3\theta_{3})}-e^{\iota(k_1\theta_{1}+k_2\theta_{2}+k_3\theta_{3}))}.
\end{eqnarray}
Consequently,the corresponding absolute values are $2\sqrt{1\pm\sin\theta_{1}\sin(k_2\theta_{2}\pm k_3\theta_{3})}$. Clearly, for the maximal quantum violation the operators $A^{(1)}$ and $A'^{(1)}$ must maximally anti-commute, i.e., $\theta_{1}=\pm\frac{\pi}{2}$. However, the maximal quantum violation only requires the sum  $k_2\theta_{2}\pm k_3\theta_{3}=\pm\frac{\pi}{2}$, i.e., the optimal $A'^{(2)}$ is defined only in reference to $A'^{(3)}$. Clearly, for the maximal quantum violation $2\sqrt{2}$ the shared state must be equivalent to the bipartite maximally entangled state $\ket{GHZ_2}$ but these inequalities do not satisfy the Mayers-Yao self-testing criterion   .

Finally, we have the equivalence class of  \textit{CHSH-like inequalities} of the form, 
\begin{eqnarray}
&\frac{1}{2}\Big(& \mean{A^{(1)}A^{(2)}A^{(3)}}+\mean{A^{(1)}A'^{(2)}A^{(3)}}\nonumber\\
& &+\mean{A'^{(1)}A^{(2)}A^{(3)}}-\mean{A'^{(1)}A'^{(2)}A^{(3)}}\Big)\leq 1.\nonumber
\end{eqnarray}
These inequalities are equivalent to the CHSH or $\mathcal{M}_2$ inequality for which the proof of Theorem \ref{MABKTheo} directly applies. The nonzero elements of the anti-diagonal matrix representation of the corresponding Bell operator are of the form,
 \begin{eqnarray}
 (1+e^{\iota k_1\theta_{1}}+e^{\iota k_2\theta_{2}}-e^{\iota (k_1\theta_{1}+k_2\theta_{2}}).
 \end{eqnarray}
The modulo of these values is $2\sqrt{1\pm\sin\theta_{1}\sin\theta_{2}}$, which implies that for quantum violation $\sqrt{2}$ the observables for the first pair of parties must maximally anti-commute, i.e., $\pm \theta_{1}=\pm\theta_{2}=\frac{\pi}{2}$ and shared state must be equivalent to the bipartite maximally entangled state $\ket{GHZ_2}$.

Apart from the linear inequalities considered here, it is easy to see that our proof technique, which relies on the anti-diagonal matrix representation of the Bell operator, directly yields the self-testing statements for all two settings binary outcome linear (on correlators) multipartite Bell inequalities. Next, we demonstrate that our proof technique can be applied to obtain self-testing statements for a large class of non-linear multipartite Bell inequalities.     

\subsection{$N$ party Uffink's quadratic inequalities}
As the convex set of bi-seprable multipartite quantum correlations does not form a polytope, linear inequalities like Svetlichny inequalities do not form tight efficient witnesses of genuine multipartite entanglement. on the other hand, the non-linear inequalities such as the Uffink's family of $N\geq 3$ party quadratic (on correlators) inequalities \eqref{Uff} better capture the boundary of the quantum set of bi-separable correlations, and hence form better witnesses of genuine multipartite quantum non-locality. Relying on the simple observation that for any two real numbers $x_1,x_2\in\mathbb{R}$, $x_1^2+x_2^2=|x_1+\iota x_2|^2$, we linearize the Uffink's $N\geq 3$ party quadratic inequalities and obtain self-testing statements based on anti-diagonal matrix representation of an Hermitian operator.
\begin{theorem}\label{UFFt}
In order to achieve maximal quantum violation of a $N$ party Uffnick's inequality, $\mean{\mathcal{U}_N}=2^{N+1}$ , the parties must share a $N\geq 3$ qubit GHZ state $\ket{GHZ_N}=\frac{1}{\sqrt{2}}(\ket{0}^{\otimes N}+ e^{\iota \phi_{ N}}\ket{1}^{\otimes N})$ and perform maximally anti-commuting projective measurements $A^{(j)}=\sigma_x$ and $A'^{(j)}=\sigma_y$ (upto local auxiliary systems and local isometries). 
\end{theorem}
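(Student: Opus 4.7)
The plan is to adapt the proof strategy of Theorem~\ref{MABKTheo} to the quadratic Bell function by first linearising it via the elementary identity $x_1^{2}+x_2^{2}=|x_1+\iota x_2|^{2}$ applied to $x_1=\mean{\mathcal{M}_N}$, $x_2=\mean{\mathcal{M}'_N}$. Writing $\mathcal{U}^{\mathcal{M}}_N=|\mean{\mathcal{M}_N+\iota\mathcal{M}'_N}|^{2}$, the problem reduces to upper-bounding the modulus of the expectation of a single (non-Hermitian) operator, which I then attack using the anti-diagonal representation developed in Section~\ref{maths}, exactly as for the MABK inequality itself.

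The key algebraic step is to simplify $\mathcal{M}_N+\iota\mathcal{M}'_N$ using the closed form \eqref{MABK}. Since swapping $A^{(j)}\leftrightarrow A'^{(j)}$ inside the factors yields $A'+\iota A=\iota(A-\iota A')$ and $A'-\iota A=-\iota(A+\iota A')$, the expression for $\mathcal{M}'_N$ reorganises into the same two tensor products $\bigotimes_{j}(A^{(j)}\pm\iota A'^{(j)})$ that appear in $\mathcal{M}_N$, multiplied by phases $\pm\iota^{N}$. Tracking the overall prefactor using $(1-\iota)^{N-1}/(1+\iota)^{N-1}=(-\iota)^{N-1}$, I expect the $\bigotimes(A-\iota A')$ contribution to cancel and the $\bigotimes(A+\iota A')$ contribution to add coherently, giving a clean identity
\begin{equation*}
\mathcal{M}_N+\iota\mathcal{M}'_N \;=\; c_N\,\bigotimes_{j=1}^{N}(A^{(j)}+\iota A'^{(j)}),
\end{equation*}
with $|c_N|=2^{-(N-1)/2}$. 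Once this is established, applying Theorem~\ref{3TM} in the $\sigma_x,\sigma_y$ parametrisation shows that each local factor is anti-diagonal with off-diagonal entries of moduli $\sqrt{2\pm 2\sin\theta_j}$, so the full tensor product is anti-diagonal on $(\mathbb{C}^2)^{\otimes N}$.

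From this point the argument mirrors the end of the proof of Theorem~\ref{MABKTheo}. I would consider an optimal state of the form $\ket{\psi_N}=\alpha\ket{0}^{\otimes N}+\beta\ket{1}^{\otimes N}$, which picks out the two extreme anti-diagonal entries of $c_N\bigotimes(A^{(j)}+\iota A'^{(j)})$; triangle inequality followed by $2|\alpha||\beta|\le 1$ then bounds $|\mean{\mathcal{M}_N+\iota\mathcal{M}'_N}|$ by a product of the form $|c_N|\bigl(\prod_j\sqrt{2+2\sin\theta_j}+\prod_j\sqrt{2-2\sin\theta_j}\bigr)/2$, which, after the elementary estimate $\cos^{N}\!\alpha+\sin^{N}\!\alpha\le 1$ for $N\ge 2$, saturates precisely at $|\alpha|=|\beta|=1/\sqrt{2}$ and $\theta_j=\pm\pi/2$ for every party, recovering the claimed GHZ state and maximally anti-commuting $\sigma_x,\sigma_y$ measurements. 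The main obstacle I expect is the justification that no state outside the GHZ pair $\{\ket{0}^{\otimes N},\ket{1}^{\otimes N}\}$ can do better: because $\mathcal{M}_N+\iota\mathcal{M}'_N$ is anti-diagonal but non-Hermitian, one cannot simply invoke its largest eigenvalue, and instead one must argue, block by block, that any contribution from a different anti-diagonal pair strictly decreases $|\mean{\cdot}|$ once the $\theta_j$ have been driven to $\pm\pi/2$ (at which point only one of the two local off-diagonal entries survives, trivialising all the other anti-diagonal pairs).
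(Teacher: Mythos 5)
Your proposal follows essentially the same route as the paper's proof: the linearisation $\mean{\mathcal{M}_N}^2+\mean{\mathcal{M}'_N}^2=|\mean{\mathcal{M}_N+\iota\mathcal{M}'_N}|^2$, the identity $\mathcal{M}_N+\iota\mathcal{M}'_N=\left(\frac{1-\iota}{2}\right)^{N-1}\bigotimes_j(A^{(j)}+\iota A'^{(j)})$ (which is exactly Eq.~\eqref{tildeU}, and your cancellation argument for it is correct), the anti-diagonal representation, and the triangle-inequality bound on the $\{\ket{0}^{\otimes N},\ket{1}^{\otimes N}\}$ subspace forcing $|\alpha|=|\beta|=1/\sqrt{2}$ and $\theta_j=\pm\pi/2$. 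The obstacle you flag at the end --- ruling out contributions from other anti-diagonal pairs of the non-Hermitian operator --- is handled in the paper only by asserting that the other two-dimensional subspaces are fully equivalent up to signs of phases, so your treatment is no less complete than the original.
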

\begin{proof}
We start with a linearization of the Bell criterion. Note that as both $\mathcal{U}_{N}$ and $\mathcal{U}'_{N}$ operators are Hermitian, $\mean{\mathcal{U}_N}, \mean{\mathcal{U}'_N} \in \mathbb{R}$, we have the following identity, 
\begin{equation}
\mathcal{U}^{\mathcal{M}}_{N}=\mean{\mathcal{U}_N}^2+\mean{\mathcal{U}'_N}^2=|\mean{\mathcal{U}_N}+\iota\mean{\mathcal{U}'_N}|^2.
\end{equation}
Our problem is hence reduced to studying the modulo of the mean value of non-Hermitian operator, $|\mathcal{\tilde{U}}_N|$, 
\begin{eqnarray}\label{tildeU}
\mathcal{\tilde{U}}_N=&&(\mathcal{U}_N+\iota\mathcal{U}'_N)\nonumber\\
=&&\left(\frac{1-\iota}{2}\right)^{N-1}\bigotimes_{j=1}^{N}(A^{[j]}+\iota A'^{[j]}).
\end{eqnarray}
From lemma \ref{spectrum} and theorem \ref{3TM}, without loss of generality, the local observables of any party $j$ can be taken to be, 
\begin{eqnarray} \nonumber
    & A^{(j)}&=\sigma_x, \\
    & A'^{(j)}&=\cos{\theta_{j}}\sigma_x+\sin{\theta_{j}}\sigma_y,
\end{eqnarray}
acting on the effective two dimensional subspace. This parametrization implies that the uffink operator  $\mathcal{\tilde{U}}_{N}$ \eqref{tildeU} has the following anti-diagonal matrix representation,
\begin{eqnarray} \label{UFFMatrix}
 \mathcal{\tilde{U}}_{N}  =\text{adiag} \begin{pmatrix}
\Big((\frac{1-\iota}{2})^{N-1}\Pi_{j=1}^{N}(1+\iota e^{-\iota\theta_j}\Big)\\
\vdots\\
\Big((\frac{1-\iota}{2})^{N-1}\Pi_{j=1}^{N}(1-\iota e^{\iota\theta_j}\Big)
\end{pmatrix},
\end{eqnarray}
where $\text{adiag}$ represent a matrix with values on the anti-diagonal. It is easy to see for any combination, $\forall j\in\{1,\ldots,N\}:\theta_j=\pm\frac{\pi}{2}$, one of the anti-diagonal element attains the maximum absolute value of $2^\frac{N-1}{2}$ while the others vanish.  

Yet again, for simplicity \footnote{Other subspaces are fully equivalent, with the sings standing in front of phases being the only difference.}, we consider on the subspace spanned by $\{\ket{0}^{\otimes N}, \ket{1}^{\otimes N}\}$. On this subspace, the modulo of the complex expectation value of $\mathcal{\tilde{U}}_N$ with respect to the state $\ket{\psi_N}=\alpha\ket{0}^{\otimes N}+ \beta\ket{1}^{\otimes N}$ has the expression,

\begin{eqnarray}
&|\bra{\psi_N}\tilde{\mathcal{U}}_{N}\ket{\psi_N}| & = \frac{1}{2^{\frac{N-1}{2}}}|\overline{\alpha}\Pi_{j=1}^{N}(1+\iota e^{-\iota\theta_j})\beta \nonumber \\
& & \hspace{25pt}+\overline{\beta}\Pi_{j=1}^{N}(1-\iota e^{\iota \theta_j})\alpha| \nonumber \\ 
& &\leq  \frac{1}{2^{\frac{N-1}{2}}}|\alpha||\beta|\Big(\Pi_{j=1}^{N}|(1+\iota e^{-\iota\theta_j})| \nonumber\\ 
& & \hspace{25pt}+ \Pi_{j=1}^{N}|(1-\iota e^{\iota \theta_j})|\Big). 
\end{eqnarray}
As $|\alpha|^2+|\beta|^2=1$ the maximum value of the above expression can only be attained for $|\alpha|=|\beta|=\frac{1}{\sqrt{2}}$, which picks out $\ket{GHZ_N}$ as the shared state. Consequently, we retrieve the following upper bound, 
\begin{eqnarray}
&|\bra{\psi_N}\tilde{\mathcal{U}}_{N}\ket{\psi_N}| & \leq \frac{1}{2^{\frac{N+1}{2}}}\Big(\Pi_{j=1}^{N}|(1+\iota  e^{-\iota\theta_j})| \nonumber \\
& & \left.\hspace{25pt}+\Pi_{j=1}^{N}|(1-\iota e^{\iota \theta_j})|\right) \nonumber \\
& & = 2^{\frac{N-1}{2}}\left(\Pi_{j=1}^{N}\left|\cos{\left(\frac{\pi}{4}+\frac{\theta_j}{2}\right)}\right|\right.  \nonumber\\
& & \left.\hspace{25pt}+\Pi_{j=1}^{N}\left|\sin{\left(\frac{\pi}{4}+\frac{\theta_j}{2}\right)}\right|\right) \nonumber.
\end{eqnarray}
Now, dropping the positive terms corresponding to any $N-2$ parties we obtain,
\begin{eqnarray}
& &\hspace{30pt}|\bra{\psi_N}\tilde{\mathcal{U}}_{N}\ket{\psi_N}|\nonumber\\
& & \leq  \nonumber {2^{\frac{N-1}{2}}}\Big(\left|\sin\left(\frac{2\theta_{i}+\pi}{4}\right)\sin\left(\frac{2\theta_{j}+\pi}{4}\right)\right| \\ 
&   & \hspace{21pt} + \left|\cos\left(\frac{2\theta_{i}+\pi}{4}\right)\cos\left(\frac{2\theta_{j}+\pi}{4}\right)\right|\Big)\nonumber \\ 
&  & =  2^{\frac{N-1}{2}} \max\Big\{\Big|\cos(\frac{\theta_{i}+\theta_{j}+\pi}{2})\Big|,\nonumber\\
& &  \hspace{30pt} \Big|\cos(\frac{\theta_{i}-\theta_{j}}{2}n)\Big|\Big\} \nonumber \\
& & \leq  2^{\frac{N-1}{2}},
\end{eqnarray}
\begin{comment}
\begin{eqnarray}
& & \leq \frac{1}{2^{\frac{N+1}{2}}}\Big(\Pi_{j=1}^{N}|(1+\iota e^{-\iota\theta_j})|+\Pi_{j=1}^{N}|(1+\iota e^{-\iota \theta_j})|\Big) \nonumber \\
& & = {2^{\frac{N-1}{2}}}\Big(\Pi_{j=1}^{N}|\cos{\Big(\frac{\pi}{4}+\frac{\theta_j}{2}\Big)}|+ \Pi_{j=1}^{N}|\sin{\frac{\pi}{4}+\frac{\theta_j}{2}}|\Big)\nonumber \\
 & &  \leq {2^{\frac{N-1}{2}}}\Big(\left|\sin\left(\frac{2\theta_{i}+\pi}{4}\right)\sin\left(\frac{2\theta_{j}+\pi}{4}\right)\right|+ \left|\cos\left(\frac{2\theta_{i}+\pi}{4}\right)\cos\left(\frac{2\theta_{j}+\pi}{4}\right)\right|\Big)\nonumber\\
& & = {2^{\frac{N-1}{2}}}\max\left\{\left|\cos\left(\frac{\theta_{i}+\theta_{j}+\pi}{2}\right)\right|,\left|\cos\left(\frac{\theta_{i}-\theta_{j}}{2}\right)\right|\right\}\nonumber\\
& & \leq {2^{\frac{N-1}{2}}} ,
\end{eqnarray}
\end{comment}
Now, as the choice of $i,j$ is completely arbitrary the third inequality can only be saturated when $\forall j\in\{1,\ldots,N\}:\theta_j=\pm\frac{\pi}{2}$, i.e., for each party the local observables maximally anti-commute. Due to the quadratic form of the Uffink's inequalities, the state that maximally violates them is not unique, but it must be pure and LOCC-equivalent of $\ket{GHZ_N}$, i.e, the state is allowed to orbit under the action of local unitary groups. 

It remains to discuss the role of the junk state. Imagine that we have different realizations $\{R_j\}_j$ of the scenario such that they all give $\mean{\tilde{\mathcal{U}}_{N}}_{R_j}={2^{\frac{N-1}{2}}}e^{i\varphi_j}$. Realizations may differ in the orientation of the $\ket{GHZ_N}$ state and observables for each party, i.e., without loss of generality, the global state can always be written as $\sum_j\gamma_j\ket{R_j}\ket{GHZ_N}$, with $\{\ket{R_j}\}$ being orthogonal states of the rest of the system. The follows from the necessity that the observables are adjusted to the realization, requires that each orientation $\ket{R_j}$ is perfectly distinguishable from any other orientation $\ket{R_{j'}}$ by any individual party. In turn, this implies that by distinguishing between them, the parties can conditionally and locally transform each realization to the same form, ending up in the state $\left(\sum_j\gamma_j\ket{R_j}\right)\ket{GHZ_N}$.
\end{proof}

 \section{Conclusions}
 \label{CONC}
 Quantum correlations that violate Bell inequalities can certify the shared entangled state and local measurements in an entirely device-independent manner. This feature of quantum non-local correlations is referred to as self-testing. In this work, we presented a simple and broadly applicable proof technique to obtain self-testing statements for the maximum quantum violation of a large relevant class of multipartite  (involving arbitrary number $N$ of spatially separated parties) Bell inequalities. Unlike the relatively straightforward bipartite Bell scenarios, the multipartite scenarios are substantially richer in complexity owing to the various kinds of multipartite non-locality. 
 
 In contrast to the traditionally employed sum-of-squares-like proof techniques, which rely on the specific structure of the Bell inequality, the proof technique presented in this work applies to any Bell inequality (independent of its specific form) as it only relies on the anti-diagonal matrix representation of the Bell operator. In section \ref{maths} we show that the observables of each party in two setting binary outcome multipartite Bell scenarios can always be simultaneously represented as anti-diagonal matrices (theorem \ref{3TM}). Consequently, in such scenarios, the Hermitian Bell operators corresponding to all linear (on correlators) Bell inequalities also have an anti-diagonal matrix representation. In the section \ref{ST3W}, to demonstrate our proof technique, we obtain proofs of self-testing statements for the MABK family of $N$ party inequalities (theorem \ref{MABKTheo}) \footnote{This constitutes a reproduction of the self-testing statements for MABK inequalities from Ref. \cite{PhysRevA.95.062323}, and hence serves as a preliminary certification of our proof technique.}, followed by self-testing statements for tripartite  WWW\.ZB inequalities (section \ref{WWWZB1}). These inequalities exemplify all possible exceptions to perfect self-testing statements such as non-unique optimal states or observables, and non-anticommuting pairs of optimal measurements. To further demonstrate the versatility of our proof technique which relies only on the anti-diagonal matrix representation of the Bell operator and not even on its Hermiticity, we obtain self-testing statements for Uffink's family of quadratic Bell inequalities (theorem \ref{UFFt}). These quadratic Bell inequalities do not allow for sum-of-squares-like techniques which rely on the linearity of the Bell inequalities, and hence form the distinguishing application of our proof technique. 
 
 As self-testing statements completely certify the local sub-systems of each party, they can readily be used to certify the parties' private randomness, which in turn fuels cryptographic applications like device-independent randomness certification and conference key distribution schemes. As Uffink's quadratic inequalities form tighter witnesses of genuine multipartite non-locality compared to the linear inequalities, the corresponding self-testing statements can fuel information processing and communication tasks that require the participation of all parties, such as quantum secret sharing.       
 
\section{Acknowledgements}
We would like to thank Anubhav Chaturvedi for insightful discussions. This work is a part of NCN
Grant  No. 2017/26/E/ST2/01008. MW acknowledges partial support by the Foundation for Polish
Science (IRAP project, ICTQT, Contract No. 2018/MAB/5,
co-financed by EU within Smart Growth Operational Programme).
\section{Note}
During the preparation of the current work, we have become aware of \cite{Augusiak}. While it presents similar results, our proof technique, which constitutes the main conceptual contribution of this work, is substantially different from their proof technique. In particular, the key contrasting feature of our proof technique is its applicability to non-linear Bell inequalities.

\newpage
\bibliography{cite}

\begin{thebibliography}{37}
\expandafter\ifx\csname natexlab\endcsname\relax\def\natexlab#1{#1}\fi
\expandafter\ifx\csname bibnamefont\endcsname\relax
  \def\bibnamefont#1{#1}\fi
\expandafter\ifx\csname bibfnamefont\endcsname\relax
  \def\bibfnamefont#1{#1}\fi
\expandafter\ifx\csname citenamefont\endcsname\relax
  \def\citenamefont#1{#1}\fi
\expandafter\ifx\csname url\endcsname\relax
  \def\url#1{\texttt{#1}}\fi
\expandafter\ifx\csname urlprefix\endcsname\relax\def\urlprefix{URL }\fi
\providecommand{\bibinfo}[2]{#2}
\providecommand{\eprint}[2][]{\url{#2}}

\bibitem[{\citenamefont{Bell}(1964)}]{PhysicsPhysiqueFizika.1.195}
\bibinfo{author}{\bibfnamefont{J.~S.} \bibnamefont{Bell}},
  \bibinfo{journal}{Physics Physique Fizika} \textbf{\bibinfo{volume}{1}},
  \bibinfo{pages}{195} (\bibinfo{year}{1964}),
  \urlprefix\url{https://link.aps.org/doi/10.1103/PhysicsPhysiqueFizika.1.195}.

\bibitem[{\citenamefont{Brunner et~al.}(2014)\citenamefont{Brunner, Cavalcanti,
  Pironio, Scarani, and Wehner}}]{RevModPhys.86.419}
\bibinfo{author}{\bibfnamefont{N.}~\bibnamefont{Brunner}},
  \bibinfo{author}{\bibfnamefont{D.}~\bibnamefont{Cavalcanti}},
  \bibinfo{author}{\bibfnamefont{S.}~\bibnamefont{Pironio}},
  \bibinfo{author}{\bibfnamefont{V.}~\bibnamefont{Scarani}}, \bibnamefont{and}
  \bibinfo{author}{\bibfnamefont{S.}~\bibnamefont{Wehner}},
  \bibinfo{journal}{Rev. Mod. Phys.} \textbf{\bibinfo{volume}{86}},
  \bibinfo{pages}{419} (\bibinfo{year}{2014}),
  \urlprefix\url{https://link.aps.org/doi/10.1103/RevModPhys.86.419}.

\bibitem[{\citenamefont{Hensen et~al.}(2015)\citenamefont{Hensen, Bernien,
  Dr{\'e}au, Reiserer, Kalb, Blok, Ruitenberg, Vermeulen, Schouten, Abell{\'a}n
  et~al.}}]{Hensen2015}
\bibinfo{author}{\bibfnamefont{B.}~\bibnamefont{Hensen}},
  \bibinfo{author}{\bibfnamefont{H.}~\bibnamefont{Bernien}},
  \bibinfo{author}{\bibfnamefont{A.~E.} \bibnamefont{Dr{\'e}au}},
  \bibinfo{author}{\bibfnamefont{A.}~\bibnamefont{Reiserer}},
  \bibinfo{author}{\bibfnamefont{N.}~\bibnamefont{Kalb}},
  \bibinfo{author}{\bibfnamefont{M.~S.} \bibnamefont{Blok}},
  \bibinfo{author}{\bibfnamefont{J.}~\bibnamefont{Ruitenberg}},
  \bibinfo{author}{\bibfnamefont{R.~F.~L.} \bibnamefont{Vermeulen}},
  \bibinfo{author}{\bibfnamefont{R.~N.} \bibnamefont{Schouten}},
  \bibinfo{author}{\bibfnamefont{C.}~\bibnamefont{Abell{\'a}n}},
  \bibnamefont{et~al.}, \bibinfo{journal}{Nature}
  \textbf{\bibinfo{volume}{526}}, \bibinfo{pages}{682} (\bibinfo{year}{2015}),
  ISSN \bibinfo{issn}{1476-4687},
  \urlprefix\url{https://doi.org/10.1038/nature15759}.

\bibitem[{\citenamefont{Giustina et~al.}(2015)\citenamefont{Giustina,
  Versteegh, Wengerowsky, Handsteiner, Hochrainer, Phelan, Steinlechner,
  Kofler, Larsson, Abell\'an et~al.}}]{PhysRevLett.115.250401}
\bibinfo{author}{\bibfnamefont{M.}~\bibnamefont{Giustina}},
  \bibinfo{author}{\bibfnamefont{M.~A.~M.} \bibnamefont{Versteegh}},
  \bibinfo{author}{\bibfnamefont{S.}~\bibnamefont{Wengerowsky}},
  \bibinfo{author}{\bibfnamefont{J.}~\bibnamefont{Handsteiner}},
  \bibinfo{author}{\bibfnamefont{A.}~\bibnamefont{Hochrainer}},
  \bibinfo{author}{\bibfnamefont{K.}~\bibnamefont{Phelan}},
  \bibinfo{author}{\bibfnamefont{F.}~\bibnamefont{Steinlechner}},
  \bibinfo{author}{\bibfnamefont{J.}~\bibnamefont{Kofler}},
  \bibinfo{author}{\bibfnamefont{J.-A.} \bibnamefont{Larsson}},
  \bibinfo{author}{\bibfnamefont{C.}~\bibnamefont{Abell\'an}},
  \bibnamefont{et~al.}, \bibinfo{journal}{Phys. Rev. Lett.}
  \textbf{\bibinfo{volume}{115}}, \bibinfo{pages}{250401}
  (\bibinfo{year}{2015}),
  \urlprefix\url{https://link.aps.org/doi/10.1103/PhysRevLett.115.250401}.

\bibitem[{\citenamefont{Shalm et~al.}(2015)\citenamefont{Shalm, Meyer-Scott,
  Christensen, Bierhorst, Wayne, Stevens, Gerrits, Glancy, Hamel, Allman
  et~al.}}]{PhysRevLett.115.250402}
\bibinfo{author}{\bibfnamefont{L.~K.} \bibnamefont{Shalm}},
  \bibinfo{author}{\bibfnamefont{E.}~\bibnamefont{Meyer-Scott}},
  \bibinfo{author}{\bibfnamefont{B.~G.} \bibnamefont{Christensen}},
  \bibinfo{author}{\bibfnamefont{P.}~\bibnamefont{Bierhorst}},
  \bibinfo{author}{\bibfnamefont{M.~A.} \bibnamefont{Wayne}},
  \bibinfo{author}{\bibfnamefont{M.~J.} \bibnamefont{Stevens}},
  \bibinfo{author}{\bibfnamefont{T.}~\bibnamefont{Gerrits}},
  \bibinfo{author}{\bibfnamefont{S.}~\bibnamefont{Glancy}},
  \bibinfo{author}{\bibfnamefont{D.~R.} \bibnamefont{Hamel}},
  \bibinfo{author}{\bibfnamefont{M.~S.} \bibnamefont{Allman}},
  \bibnamefont{et~al.}, \bibinfo{journal}{Phys. Rev. Lett.}
  \textbf{\bibinfo{volume}{115}}, \bibinfo{pages}{250402}
  (\bibinfo{year}{2015}),
  \urlprefix\url{https://link.aps.org/doi/10.1103/PhysRevLett.115.250402}.

\bibitem[{\citenamefont{Vazirani and Vidick}(2014)}]{2014}
\bibinfo{author}{\bibfnamefont{U.}~\bibnamefont{Vazirani}} \bibnamefont{and}
  \bibinfo{author}{\bibfnamefont{T.}~\bibnamefont{Vidick}},
  \bibinfo{journal}{Physical Review Letters} \textbf{\bibinfo{volume}{113}}
  (\bibinfo{year}{2014}), ISSN \bibinfo{issn}{1079-7114},
  \urlprefix\url{http://dx.doi.org/10.1103/PhysRevLett.113.140501}.

\bibitem[{\citenamefont{Ac\'{\i}n et~al.}(2006)\citenamefont{Ac\'{\i}n, Gisin,
  and Masanes}}]{PhysRevLett.97.120405}
\bibinfo{author}{\bibfnamefont{A.}~\bibnamefont{Ac\'{\i}n}},
  \bibinfo{author}{\bibfnamefont{N.}~\bibnamefont{Gisin}}, \bibnamefont{and}
  \bibinfo{author}{\bibfnamefont{L.}~\bibnamefont{Masanes}},
  \bibinfo{journal}{Phys. Rev. Lett.} \textbf{\bibinfo{volume}{97}},
  \bibinfo{pages}{120405} (\bibinfo{year}{2006}),
  \urlprefix\url{https://link.aps.org/doi/10.1103/PhysRevLett.97.120405}.

\bibitem[{\citenamefont{Colbeck}(2011)}]{colbeck2011quantum}
\bibinfo{author}{\bibfnamefont{R.}~\bibnamefont{Colbeck}},
  \emph{\bibinfo{title}{Quantum and relativistic protocols for secure
  multi-party computation}} (\bibinfo{year}{2011}), \eprint{0911.3814},
  \urlprefix\url{https://arxiv.org/abs/0911.3814}.

\bibitem[{\citenamefont{Ekert and Renner}(2014)}]{Ekert2014}
\bibinfo{author}{\bibfnamefont{A.}~\bibnamefont{Ekert}} \bibnamefont{and}
  \bibinfo{author}{\bibfnamefont{R.}~\bibnamefont{Renner}},
  \bibinfo{journal}{Nature} \textbf{\bibinfo{volume}{507}},
  \bibinfo{pages}{443} (\bibinfo{year}{2014}), ISSN \bibinfo{issn}{1476-4687},
  \urlprefix\url{https://doi.org/10.1038/nature13132}.

\bibitem[{\citenamefont{Ekert}(1991)}]{PhysRevLett.67.661}
\bibinfo{author}{\bibfnamefont{A.~K.} \bibnamefont{Ekert}},
  \bibinfo{journal}{Phys. Rev. Lett.} \textbf{\bibinfo{volume}{67}},
  \bibinfo{pages}{661} (\bibinfo{year}{1991}),
  \urlprefix\url{https://link.aps.org/doi/10.1103/PhysRevLett.67.661}.

\bibitem[{\citenamefont{Mayers and Yao}(1998)}]{743501}
\bibinfo{author}{\bibfnamefont{D.}~\bibnamefont{Mayers}} \bibnamefont{and}
  \bibinfo{author}{\bibfnamefont{A.}~\bibnamefont{Yao}}, in
  \emph{\bibinfo{booktitle}{Proceedings 39th Annual Symposium on Foundations of
  Computer Science (Cat. No.98CB36280)}} (\bibinfo{year}{1998}), pp.
  \bibinfo{pages}{503--509},
  \urlprefix\url{https://ieeexplore.ieee.org/document/743501}.

\bibitem[{\citenamefont{Mayers and Yao}(2004)}]{mayers2004self}
\bibinfo{author}{\bibfnamefont{D.}~\bibnamefont{Mayers}} \bibnamefont{and}
  \bibinfo{author}{\bibfnamefont{A.}~\bibnamefont{Yao}},
  \emph{\bibinfo{title}{Self testing quantum apparatus}}
  (\bibinfo{year}{2004}), \eprint{quant-ph/0307205}.

\bibitem[{\citenamefont{{\v{S}}upi{\'{c}} and
  Bowles}(2020)}]{Supic2020selftestingof}
\bibinfo{author}{\bibfnamefont{I.}~\bibnamefont{{\v{S}}upi{\'{c}}}}
  \bibnamefont{and} \bibinfo{author}{\bibfnamefont{J.}~\bibnamefont{Bowles}},
  \bibinfo{journal}{{Quantum}} \textbf{\bibinfo{volume}{4}},
  \bibinfo{pages}{337} (\bibinfo{year}{2020}), ISSN \bibinfo{issn}{2521-327X},
  \urlprefix\url{https://doi.org/10.22331/q-2020-09-30-337}.

\bibitem[{\citenamefont{Colbeck and Kent}(2011)}]{Colbeck_2011}
\bibinfo{author}{\bibfnamefont{R.}~\bibnamefont{Colbeck}} \bibnamefont{and}
  \bibinfo{author}{\bibfnamefont{A.}~\bibnamefont{Kent}},
  \bibinfo{journal}{Journal of Physics A: Mathematical and Theoretical}
  \textbf{\bibinfo{volume}{44}}, \bibinfo{pages}{095305}
  (\bibinfo{year}{2011}),
  \urlprefix\url{https://doi.org/10.1088/1751-8113/44/9/095305}.

\bibitem[{\citenamefont{Gheorghiu et~al.}(2019)\citenamefont{Gheorghiu,
  Kapourniotis, and Kashefi}}]{Gheorghiu2019}
\bibinfo{author}{\bibfnamefont{A.}~\bibnamefont{Gheorghiu}},
  \bibinfo{author}{\bibfnamefont{T.}~\bibnamefont{Kapourniotis}},
  \bibnamefont{and} \bibinfo{author}{\bibfnamefont{E.}~\bibnamefont{Kashefi}},
  \bibinfo{journal}{Theory of Computing Systems} \textbf{\bibinfo{volume}{63}},
  \bibinfo{pages}{715} (\bibinfo{year}{2019}), ISSN \bibinfo{issn}{1433-0490},
  \urlprefix\url{https://doi.org/10.1007/s00224-018-9872-3}.

\bibitem[{\citenamefont{Bowles et~al.}(2018)\citenamefont{Bowles, \ifmmode
  \check{S}\else \v{S}\fi{}upi\ifmmode~\acute{c}\else \'{c}\fi{}, Cavalcanti,
  and Ac\'{\i}n}}]{PhysRevA.98.042336}
\bibinfo{author}{\bibfnamefont{J.}~\bibnamefont{Bowles}},
  \bibinfo{author}{\bibfnamefont{I.}~\bibnamefont{\ifmmode \check{S}\else
  \v{S}\fi{}upi\ifmmode~\acute{c}\else \'{c}\fi{}}},
  \bibinfo{author}{\bibfnamefont{D.}~\bibnamefont{Cavalcanti}},
  \bibnamefont{and}
  \bibinfo{author}{\bibfnamefont{A.}~\bibnamefont{Ac\'{\i}n}},
  \bibinfo{journal}{Phys. Rev. A} \textbf{\bibinfo{volume}{98}},
  \bibinfo{pages}{042336} (\bibinfo{year}{2018}),
  \urlprefix\url{https://link.aps.org/doi/10.1103/PhysRevA.98.042336}.

\bibitem[{\citenamefont{Sekatski et~al.}(2018)\citenamefont{Sekatski, Bancal,
  Wagner, and Sangouard}}]{PhysRevLett.121.180505}
\bibinfo{author}{\bibfnamefont{P.}~\bibnamefont{Sekatski}},
  \bibinfo{author}{\bibfnamefont{J.-D.} \bibnamefont{Bancal}},
  \bibinfo{author}{\bibfnamefont{S.}~\bibnamefont{Wagner}}, \bibnamefont{and}
  \bibinfo{author}{\bibfnamefont{N.}~\bibnamefont{Sangouard}},
  \bibinfo{journal}{Phys. Rev. Lett.} \textbf{\bibinfo{volume}{121}},
  \bibinfo{pages}{180505} (\bibinfo{year}{2018}),
  \urlprefix\url{https://link.aps.org/doi/10.1103/PhysRevLett.121.180505}.

\bibitem[{\citenamefont{McKague}(2014)}]{10.1007/978-3-642-54429-3_7}
\bibinfo{author}{\bibfnamefont{M.}~\bibnamefont{McKague}}, in
  \emph{\bibinfo{booktitle}{Theory of Quantum Computation, Communication, and
  Cryptography}}, edited by
  \bibinfo{editor}{\bibfnamefont{D.}~\bibnamefont{Bacon}},
  \bibinfo{editor}{\bibfnamefont{M.}~\bibnamefont{Martin-Delgado}},
  \bibnamefont{and} \bibinfo{editor}{\bibfnamefont{M.}~\bibnamefont{Roetteler}}
  (\bibinfo{publisher}{Springer Berlin Heidelberg}, \bibinfo{address}{Berlin,
  Heidelberg}, \bibinfo{year}{2014}), pp. \bibinfo{pages}{104--120}, ISBN
  \bibinfo{isbn}{978-3-642-54429-3},
  \urlprefix\url{https://link.springer.com/book/10.1007/978-3-642-18073-6}.

\bibitem[{\citenamefont{Baccari et~al.}(2020)\citenamefont{Baccari, Augusiak,
  \ifmmode \check{S}\else \v{S}\fi{}upi\ifmmode~\acute{c}\else \'{c}\fi{},
  Tura, and Ac\'{\i}n}}]{PhysRevLett.124.020402}
\bibinfo{author}{\bibfnamefont{F.}~\bibnamefont{Baccari}},
  \bibinfo{author}{\bibfnamefont{R.}~\bibnamefont{Augusiak}},
  \bibinfo{author}{\bibfnamefont{I.}~\bibnamefont{\ifmmode \check{S}\else
  \v{S}\fi{}upi\ifmmode~\acute{c}\else \'{c}\fi{}}},
  \bibinfo{author}{\bibfnamefont{J.}~\bibnamefont{Tura}}, \bibnamefont{and}
  \bibinfo{author}{\bibfnamefont{A.}~\bibnamefont{Ac\'{\i}n}},
  \bibinfo{journal}{Phys. Rev. Lett.} \textbf{\bibinfo{volume}{124}},
  \bibinfo{pages}{020402} (\bibinfo{year}{2020}),
  \urlprefix\url{https://link.aps.org/doi/10.1103/PhysRevLett.124.020402}.

\bibitem[{\citenamefont{Greenberger et~al.}(2007)\citenamefont{Greenberger,
  Horne, and Zeilinger}}]{greenberger2007going}
\bibinfo{author}{\bibfnamefont{D.~M.} \bibnamefont{Greenberger}},
  \bibinfo{author}{\bibfnamefont{M.~A.} \bibnamefont{Horne}}, \bibnamefont{and}
  \bibinfo{author}{\bibfnamefont{A.}~\bibnamefont{Zeilinger}},
  \emph{\bibinfo{title}{Going beyond bell's theorem}} (\bibinfo{year}{2007}),
  \eprint{0712.0921}, \urlprefix\url{https://arxiv.org/abs/0712.0921}.

\bibitem[{\citenamefont{P\'al et~al.}(2014)\citenamefont{P\'al, V\'ertesi, and
  Navascu\'es}}]{PhysRevA.90.042340}
\bibinfo{author}{\bibfnamefont{K.~F.} \bibnamefont{P\'al}},
  \bibinfo{author}{\bibfnamefont{T.}~\bibnamefont{V\'ertesi}},
  \bibnamefont{and}
  \bibinfo{author}{\bibfnamefont{M.}~\bibnamefont{Navascu\'es}},
  \bibinfo{journal}{Phys. Rev. A} \textbf{\bibinfo{volume}{90}},
  \bibinfo{pages}{042340} (\bibinfo{year}{2014}),
  \urlprefix\url{https://link.aps.org/doi/10.1103/PhysRevA.90.042340}.

\bibitem[{\citenamefont{Bancal et~al.}(2015)\citenamefont{Bancal, Navascu\'es,
  Scarani, V\'ertesi, and Yang}}]{PhysRevA.91.022115}
\bibinfo{author}{\bibfnamefont{J.-D.} \bibnamefont{Bancal}},
  \bibinfo{author}{\bibfnamefont{M.}~\bibnamefont{Navascu\'es}},
  \bibinfo{author}{\bibfnamefont{V.}~\bibnamefont{Scarani}},
  \bibinfo{author}{\bibfnamefont{T.}~\bibnamefont{V\'ertesi}},
  \bibnamefont{and} \bibinfo{author}{\bibfnamefont{T.~H.} \bibnamefont{Yang}},
  \bibinfo{journal}{Phys. Rev. A} \textbf{\bibinfo{volume}{91}},
  \bibinfo{pages}{022115} (\bibinfo{year}{2015}),
  \urlprefix\url{https://link.aps.org/doi/10.1103/PhysRevA.91.022115}.

\bibitem[{\citenamefont{Yang et~al.}(2014)\citenamefont{Yang, V\'ertesi,
  Bancal, Scarani, and Navascu\'es}}]{PhysRevLett.113.040401}
\bibinfo{author}{\bibfnamefont{T.~H.} \bibnamefont{Yang}},
  \bibinfo{author}{\bibfnamefont{T.}~\bibnamefont{V\'ertesi}},
  \bibinfo{author}{\bibfnamefont{J.-D.} \bibnamefont{Bancal}},
  \bibinfo{author}{\bibfnamefont{V.}~\bibnamefont{Scarani}}, \bibnamefont{and}
  \bibinfo{author}{\bibfnamefont{M.}~\bibnamefont{Navascu\'es}},
  \bibinfo{journal}{Phys. Rev. Lett.} \textbf{\bibinfo{volume}{113}},
  \bibinfo{pages}{040401} (\bibinfo{year}{2014}),
  \urlprefix\url{https://link.aps.org/doi/10.1103/PhysRevLett.113.040401}.

\bibitem[{\citenamefont{Wu et~al.}(2014)\citenamefont{Wu, Cai, Yang, Le,
  Bancal, and Scarani}}]{PhysRevA.90.042339}
\bibinfo{author}{\bibfnamefont{X.}~\bibnamefont{Wu}},
  \bibinfo{author}{\bibfnamefont{Y.}~\bibnamefont{Cai}},
  \bibinfo{author}{\bibfnamefont{T.~H.} \bibnamefont{Yang}},
  \bibinfo{author}{\bibfnamefont{H.~N.} \bibnamefont{Le}},
  \bibinfo{author}{\bibfnamefont{J.-D.} \bibnamefont{Bancal}},
  \bibnamefont{and} \bibinfo{author}{\bibfnamefont{V.}~\bibnamefont{Scarani}},
  \bibinfo{journal}{Phys. Rev. A} \textbf{\bibinfo{volume}{90}},
  \bibinfo{pages}{042339} (\bibinfo{year}{2014}),
  \urlprefix\url{https://link.aps.org/doi/10.1103/PhysRevA.90.042339}.

\bibitem[{\citenamefont{Fadel}(2017)}]{fadel2017selftesting}
\bibinfo{author}{\bibfnamefont{M.}~\bibnamefont{Fadel}},
  \emph{\bibinfo{title}{Self-testing dicke states}} (\bibinfo{year}{2017}),
  \eprint{1707.01215}.

\bibitem[{\citenamefont{{\v{S}}upi{\'{c}}
  et~al.}(2018)\citenamefont{{\v{S}}upi{\'{c}}, Coladangelo, Augusiak, and
  Ac{\'{\i}}n}}]{_upi__2018}
\bibinfo{author}{\bibfnamefont{I.}~\bibnamefont{{\v{S}}upi{\'{c}}}},
  \bibinfo{author}{\bibfnamefont{A.}~\bibnamefont{Coladangelo}},
  \bibinfo{author}{\bibfnamefont{R.}~\bibnamefont{Augusiak}}, \bibnamefont{and}
  \bibinfo{author}{\bibfnamefont{A.}~\bibnamefont{Ac{\'{\i}}n}},
  \bibinfo{journal}{New Journal of Physics} \textbf{\bibinfo{volume}{20}},
  \bibinfo{pages}{083041} (\bibinfo{year}{2018}),
  \urlprefix\url{https://doi.org/10.1088/1367-2630/aad89b}.

\bibitem[{\citenamefont{Buhrman and Massar}(2005)}]{PhysRevA.72.052103}
\bibinfo{author}{\bibfnamefont{H.}~\bibnamefont{Buhrman}} \bibnamefont{and}
  \bibinfo{author}{\bibfnamefont{S.}~\bibnamefont{Massar}},
  \bibinfo{journal}{Phys. Rev. A} \textbf{\bibinfo{volume}{72}},
  \bibinfo{pages}{052103} (\bibinfo{year}{2005}),
  \urlprefix\url{https://link.aps.org/doi/10.1103/PhysRevA.72.052103}.

\bibitem[{\citenamefont{Hayashi and Hajdu\ifmmode~\check{s}\else
  \v{s}\fi{}ek}(2018)}]{PhysRevA.97.052308}
\bibinfo{author}{\bibfnamefont{M.}~\bibnamefont{Hayashi}} \bibnamefont{and}
  \bibinfo{author}{\bibfnamefont{M.}~\bibnamefont{Hajdu\ifmmode~\check{s}\else
  \v{s}\fi{}ek}}, \bibinfo{journal}{Phys. Rev. A}
  \textbf{\bibinfo{volume}{97}}, \bibinfo{pages}{052308}
  (\bibinfo{year}{2018}),
  \urlprefix\url{https://link.aps.org/doi/10.1103/PhysRevA.97.052308}.

\bibitem[{\citenamefont{Bamps and Pironio}(2015)}]{PhysRevA.91.052111}
\bibinfo{author}{\bibfnamefont{C.}~\bibnamefont{Bamps}} \bibnamefont{and}
  \bibinfo{author}{\bibfnamefont{S.}~\bibnamefont{Pironio}},
  \bibinfo{journal}{Phys. Rev. A} \textbf{\bibinfo{volume}{91}},
  \bibinfo{pages}{052111} (\bibinfo{year}{2015}),
  \urlprefix\url{https://link.aps.org/doi/10.1103/PhysRevA.91.052111}.

\bibitem[{\citenamefont{Uffink}(2002)}]{2002}
\bibinfo{author}{\bibfnamefont{J.}~\bibnamefont{Uffink}},
  \bibinfo{journal}{Physical Review Letters} \textbf{\bibinfo{volume}{88}}
  (\bibinfo{year}{2002}), ISSN \bibinfo{issn}{1079-7114},
  \urlprefix\url{http://dx.doi.org/10.1103/PhysRevLett.88.230406}.

\bibitem[{\citenamefont{Scarani}(2012)}]{1099690194}
\bibinfo{author}{\bibfnamefont{V.}~\bibnamefont{Scarani}},
  \bibinfo{journal}{Acta Physica Slovaca} \textbf{\bibinfo{volume}{62}}
  (\bibinfo{year}{2012}).

\bibitem[{\citenamefont{Belinski{\u{\i}} and Klyshko}(1993)}]{Belinski__1993}
\bibinfo{author}{\bibfnamefont{A.~V.} \bibnamefont{Belinski{\u{\i}}}}
  \bibnamefont{and} \bibinfo{author}{\bibfnamefont{D.~N.}
  \bibnamefont{Klyshko}}, \bibinfo{journal}{Physics-Uspekhi}
  \textbf{\bibinfo{volume}{36}}, \bibinfo{pages}{653} (\bibinfo{year}{1993}),
  \urlprefix\url{https://doi.org/10.1070/pu1993v036n08abeh002299}.

\bibitem[{\citenamefont{Ardehali}(1992)}]{PhysRevA.46.5375}
\bibinfo{author}{\bibfnamefont{M.}~\bibnamefont{Ardehali}},
  \bibinfo{journal}{Phys. Rev. A} \textbf{\bibinfo{volume}{46}},
  \bibinfo{pages}{5375} (\bibinfo{year}{1992}),
  \urlprefix\url{https://link.aps.org/doi/10.1103/PhysRevA.46.5375}.

\bibitem[{\citenamefont{Mermin}(1990)}]{PhysRevLett.65.1838}
\bibinfo{author}{\bibfnamefont{N.~D.} \bibnamefont{Mermin}},
  \bibinfo{journal}{Phys. Rev. Lett.} \textbf{\bibinfo{volume}{65}},
  \bibinfo{pages}{1838} (\bibinfo{year}{1990}),
  \urlprefix\url{https://link.aps.org/doi/10.1103/PhysRevLett.65.1838}.

\bibitem[{\citenamefont{Ponomarenko and
  Selstad}(2020)}]{Ponomarenko2020EigenvaluesOT}
\bibinfo{author}{\bibfnamefont{V.}~\bibnamefont{Ponomarenko}} \bibnamefont{and}
  \bibinfo{author}{\bibfnamefont{L.}~\bibnamefont{Selstad}},
  \bibinfo{journal}{Involve, A Journal of Mathematics}
  \textbf{\bibinfo{volume}{13}}, \bibinfo{pages}{625} (\bibinfo{year}{2020}),
  \urlprefix\url{https://vadim.sdsu.edu/research.html}.

\bibitem[{\citenamefont{Kaniewski}(2017)}]{PhysRevA.95.062323}
\bibinfo{author}{\bibfnamefont{J.~m.~k.} \bibnamefont{Kaniewski}},
  \bibinfo{journal}{Phys. Rev. A} \textbf{\bibinfo{volume}{95}},
  \bibinfo{pages}{062323} (\bibinfo{year}{2017}),
  \urlprefix\url{https://link.aps.org/doi/10.1103/PhysRevA.95.062323}.

\bibitem[{\citenamefont{Sarkar and Augusiak}(2021)}]{Augusiak}
\bibinfo{author}{\bibfnamefont{S.}~\bibnamefont{Sarkar}} \bibnamefont{and}
  \bibinfo{author}{\bibfnamefont{R.}~\bibnamefont{Augusiak}}
  (\bibinfo{year}{2021}), \urlprefix\url{https://arxiv.org/pdf/2112.10868.pdf}.

\end{thebibliography}
\end{document}